\newtheorem{theorem}{Theorem}[section]
\newtheorem{lemma}[theorem]{Lemma}
\newtheorem{corollary}[theorem]{Corollary}
\title{Fast and Practical DAG Decomposition with Reachability Applications}
\author{Giorgos Kritikakis\\ georgecretek@gmail.com
\\
\and Ioannis G. Tollis \\ tollis@csd.uoc.gr  
\and Computer Science Department, University of Crete, GREECE %
}
\date{
June 2022 - Updated November 2022
}
\begin{document}

\maketitle
\begin{abstract}

We present practical linear and almost linear-time algorithms to compute a chain decomposition of a \emph{directed acyclic graph} (DAG), $G=(V,E)$. The number of vertex-disjoint chains computed
is very close to the minimum. The time complexity of our algorithm is $O(|E|+c*l)$, where $c$ is the number of path concatenations and $l$ is the length of a longest path of the graph. We give a comprehensive explanation on factors $c$ and $l$ in the following sections. 
Our techniques have important applications in many areas, including the design of faster practical transitive closure algorithms. We observe that $|E_{red}|\leq width*|V|$ ($E_{red}$: non-transitive edges) and show how to find a substantially large subset of $E_{tr}$ (transitive edges) using a chain decomposition in linear time, without calculating the transitive closure. Our extensive experimental results show the interplay between the width, $E_{red}$, $E_{tr}$ in various models of graphs.
We show how to compute a reachability indexing scheme in $O(k_c*|E_{red}|)$ time, where $k_c$ is the number of chains and $|E_{red}|$ is the number of non-transitive edges. This scheme can answer reachabilitiy queries in constant time. The space complexity of the scheme is $O(k_c*|V|)$.  
The experimental results reveal that our methods are even better in practice than the theoretical bounds imply,  indicating how fast chain decomposition algorithms can be applied to the transitive closure problem.
\\
\\

\textbf{Keywords:} graph algorithms, hierarchy, directed acyclic graphs (DAG), path/chain decomposition, transitive closure, transitive reduction, reachability, reachability indexing scheme.

\end{abstract}




\section{Introduction}



\emph{Directed acyclic graphs} (DAGs) are very important in many applications. The \emph{width} of a DAG $G=(V,E)$ is the maximum number of mutually unreachable vertices of $G$~\cite{DILWORTH}. 
The width of a DAG is a crucial metric in graph theory and it is used in a wide area of applications. Any DAG can be decomposed into vertex disjoint \emph{paths} or \emph{chains}. In a path every vertex is connected to its
successor by an edge, while in a chain any vertex is connected to its successor by a directed path, which may be an edge. The computation of the width and path/chain decomposition have many applications including bioinformatics \cite{bonizzoni2007linear,gramm2007haplotyping}, evolutionary computation \cite{jaskowski2011formal}, databases \cite{Jagadish:1990:CTM:99935.99944},
graph drawing \cite{ortali2018algorithms,NewFrHierDrawings,lionakis2020algorithms}, distributed systems \cite{ikiz2006efficient,tomlinson1997monitoring}.

An optimum \emph{chain decomposition} of a DAG $G=(V,E)$ contains the minimum number of chains, $k$, which is equal to the width of $G$.  Due to the multitude of applications there are several algorithms to find an optimum chain decomposition, see for example~\cite{Jagadish:1990:CTM:99935.99944,Fulkerson,chendag,makinen2019sparse,caceres2021linear,caceres2022sparsifying,kogan2022beating,van2021minimum}. An FPT algorithm was presented in~\cite{caceres2021linear} that runs in $O(k^2 4^k |V | + k2^k |E|)$ time.  Clearly, this is practical only for very specific classes of DAGs that have very small values of $k$.  Most DAGs have much higher width (often a function of $n$), as shown in the experimental results of Section~\ref{sec:ChainDec}. 
Recently, better almost-linear-time algorithms were presented in~\cite{caceres2022sparsifying,kogan2022beating,van2021minimum} whose time complexity is $O(k^3|V | + |E|)$ and $\tilde{O}(|E| + |V| ^ {3/2})$, respectively.  All of these algorithms solve the chain decomposition problem by reducing it to a minimum cost flow problem.  Using the best minimum cost flow algorithm, which runs in almost-linear time~\cite{van2021minimum}, solves the problem theoretically.
However, due to the heavy mechanism, this approach is challenging to implement. 
Additionally, for several applications it is not necessary to compute an optimum chain decomposition.  We will describe heuristic techniques that compute a chain decomposition that is close to the optimum in linear or almost linear time and are easy to implement. We present a simple greedy algorithm for the chain decomposition problem.
We perform extensive experiments on several random graphs to verify its effectiveness and explore how fast chain decomposition can enhance transitive closure solutions.  Our experimental results show that the algorithm produces decompositions whose sizes are close to optimal.

In \cite{Jagadish:1990:CTM:99935.99944}, Jagadish categorized path and chain decomposition heuristics into two kinds, Chain-Order Heuristic and Node-Order Heuristic. The Chain-Order Heuristic constructs the paths one by one, while the Node-Order Heuristic creates the paths in parallel. Jagadish's heuristics  for path decomposition run in linear time, while his chain decomposition heuristics run in $O(n^2)$ time given a precomputed transitive closure. Few techniques have been presented to compute sub-optimal chain decomposition since then, see~\cite{chen2008efficient,boria2016greedy}, but none of them is as simple and fast as the algorithm we describe here.

In this paper, we explore new path and chain decomposition algorithms for DAGs.
Our focus is on practical algorithms that run in linear or almost linear time, and produce results that are very close to the optimum.
An interesting observation is that the width of some DAGs follows the curve $width =\frac{\mbox{nodes}}{\mbox{average degree}}$.  We apply the computed decomposition in order to obtain several techniques that solve fundamental problems, such as transitive closure, faster than previous algorithms. We use the notion of chain decomposition to offer bounds to the number of non transitive edges and explore how it facilitates in various transitive closure and reduction problems.  Clearly, our decomposition techniques can be used to solve various other problems in order to obtain new time and space complexity bounds.


In Section \ref{sec:PathDec}, we present path decomposition algorithms. In Section \ref{sec:ChainDec}, chain decomposition and path concatenation approaches are presented. Additionally, we show experiments and evaluate the performance of our heuristics. Moreover, the value of the width is explored as the graph becomes denser.
Next, we examine some applications of the heuristics. In Section \ref{sec:HierTrans}, we show that $|E_{red}| \leq width*|V|$, and describe how to remove a significant subset of transitive edges, $E_{tr}'$, in linear time, in order to bound $|E-E_{tr}'|$ by $O(k*V)$ given any path/chain decomposition of size $k$. This can boost many known transitive closure solutions. Finally, Section \ref{sec:IndexingScheme} demonstrates how to build an indexing scheme that implicitly contains the transitive closure and we report experimental results. Our experiments shed light on the behavior of critical factors. 

All experiments were conducted on a laptop PC (Intel(R) Core(TM) i5-6200U CPU, with 8 GB of main memory). Our algorithms have been developed as stand-alone java programs.
\begin{figure}
     \centering
     \begin{subfigure}[b]{0.19 \textwidth}
         \centering
         \includegraphics[width=\textwidth]{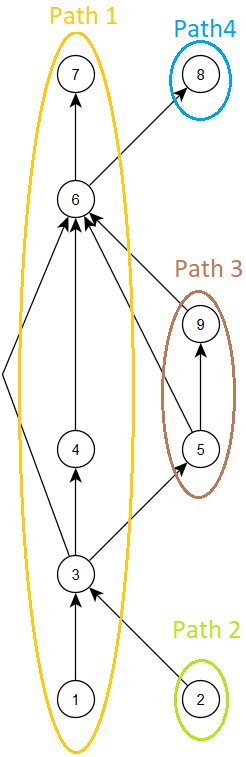}
         \caption{A path decomposition of a graph consisting of 4 paths.}
     \end{subfigure}
     \hfill
     \begin{subfigure}[b]{0.19 \textwidth}
         \centering
         \includegraphics[width=\textwidth]{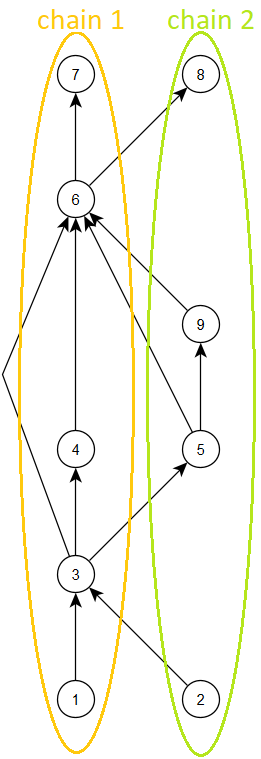}
         \caption{A chain decomposition of a graph consisting of 2 chains.}
     \end{subfigure}
             \caption{Path and chain decomposition of an example graph.}

        \label{fig:Decomposition}
\end{figure}

\subsubsection*{Definitions and Abbreviations}
\begin{itemize}
 \item{\textbf{DAG:}} Directed acyclic graph (DAG) is a directed graph with no directed cycles.
\item \textbf{Path/Chain:} In a path every vertex is connected by a direct edge to its successor, while in a chain any vertex is connected to its successor by a directed path which may be an edge. The vertices of a path/chain are in ascending topological order.
    \item \textbf{Paths/Chains decomposition of a DAG:} Let $G = (V,E)$ be a DAG. A path/chain decomposition of $G$ is a set of vertex-disjoint paths/chains. The decomposition includes all vertices of $G$. There is an example of a path and a chain decomposition in Figure \ref{fig:Decomposition}.
    \begin{itemize}
    \item $k_p$: denotes the number of paths of a path decomposition of a DAG.
    \item $k_c$: denotes the number of chains of a chain decomposition of a DAG.
    \end{itemize}
    \item \textbf{Width:} is the maximum number of mutually unreachable vertices of a graph \cite{DILWORTH}. The number of chains in a minimum chain decomposition of a graph is equal to its width.
     \item \textbf{Transitive edge:} an edge $(v_1,v_2)$ of a DAG $G$ is transitive if there is a path longer than one edge that connects $v_1$ and $v_2$. 
    \item \textbf{In a DAG $G=(V,E)$:}  
    \begin{itemize}
    \item $E_{tr}:$ is the set of all transitive edges of $G$. $E_{tr}\subset E$.
    \item $E_{tr}':$ denotes a subset of $E_{tr}$.
    \item $E_{red}:$ denotes the set of non-transitive edges: $E_{red}=E-E_{tr}$ , $E_{red}\subseteq E$. 
    \item $G_{red} = (V,E_{red}):$ denotes the transitive reduction \cite{TranReduction} of $G = (V,E)$. The transitive reduction is unique for DAGs. It contains the minimum number of edges needed to form the same transitive closure as $G=(V, E)$.
    \end{itemize}
\end{itemize}

\section{DAG Decomposition into Paths}  \label{sec:PathDec}

Jagadish in \cite{Jagadish:1990:CTM:99935.99944} categorized path decomposition techniques into two categories: Chain Order Heuristics and Node Order Heuristics.
The first constructs the paths one by one, while the second creates the paths in parallel. 
He also presented chain decomposition heuristics based on Chain Order and Node Order Heuristic,
utilizing a list of \emph{all successors} and not only the immediate for each vertex.
His algorithms require $O(n^2)$ time using a precomputed transitive closure.  He  states in~\cite{Jagadish:1990:CTM:99935.99944} that in order to build a chain decomposition  we must compute the transitive closure, implying that finding the minimum number of chains has the same complexity as the heuristics. We believe that, in part, this is an important reason for which most of the research on this topic is focusing on finding the optimum number of chains. In this work we argue the opposite:  Namely, we show that we can build an efficient chain decomposition fast and we use it in order to solve the transitive closure problem.  Theoretically, our chain decomposition heuristic has worst-case time complexity $O(n^2)$, without requiring any precomputation of the transitive closure. In practice however it actually behaves like a linear time algorithm as we explain later, both theoretically and experimentally.  Furthermore, in Sections~\ref{sec:HierTrans} and ~\ref{sec:IndexingScheme}, we show how an efficient and practical chain decomposition can crucially enhance transitive closure solutions.

More precisely our algorithm decomposes the graph into $k_c$ chains in $O(|E| + c *l)$ time, where $c$ is the number of path concatenations, and $l$ is the length of a longest path of the graph. An important observation is that the factor $c*l$ depends on the number of successful concatenations c. Hence, the more time our algorithm takes, the more concatenations it completes. If there are no concatenations then $c*l$ equals zero.
In our experiments, the factor $(c*l)$ is less than $|E|$ in almost all cases. We will describe our technique in detail in the next section.  

In the rest of this section, we describe the linear time algorithms for path decomposition using only the immediate successors. Since paths are by definition chains of a special structure, we use the more general term chain in our algorithms.  Clearly, similar algorithms can be used also for chain decomposition. We assume that topological sorting has been performed and examine the vertices in ascending order.

\subsection*{Chain Order Heuristic}
The chain-order heuristic starts from a vertex and keeps on extending the path to the extent possible. The path ends when no more unused immediate successors can be found. The first for loop of~Algorithm~\ref{alg:CO} finds an unused vertex and creates a path. The inner while loop extends the path.

\begin{algorithm}[!ht]
\caption{Path Decomposition CO}\label{alg:CO}
\begin{algorithmic}[1]
\Procedure{ChainOrderHeuristic}{$G,T$}\newline
 \textbf{INPUT:} A DAG $G=(V,E)$, and a topological sorting $T(v_1,...,v_i,...,v_N)$ of G \newline
 \textbf{OUTPUT:} A path decomposition of G
 \State $K \gets \emptyset$ \Comment{Set of paths}
 \State Mark all nodes \textbf{unused}
 
\For{ \text{every }\textbf{unused } \text{vertex }$v_i \in T$ \text{in ascending topological order   }}
\State $current \gets v_i$
\State  $C \gets \mbox{new Chain}()$
\State \textbf{Add} $current$ \textbf{to} $C$
\While{there is an \textbf{unused} immediate successor $s$ of the \textbf{current} node}
  \State \textbf{add} $s$ \textbf{to} $C$
  \State $current \gets s$
\EndWhile
\State \textbf{add} $C$ \textbf{to} $K$
\EndFor
\State \textbf{return} $K$  
\EndProcedure
\end{algorithmic}
\end{algorithm}

 \subsection*{Node Order Heuristic}
 The node-order heuristic examines each vertex (node) $v_i$ and assigns it to an existing chain. If there is no such chain, then a new chain is created for vertex $v_i$. Algorithm~\ref{alg:NO} illustrates the node order heuristic. 
 \begin{algorithm}[!ht]
\caption{Path Decomposition NO}\label{alg:NO}
\begin{algorithmic}[1]
\Procedure{NodeOrderHeuristic}{$G,T$}\newline
 \textbf{INPUT:} A DAG $G=(V,E)$, and a topological sorting $T(v_1,...,v_i,...,v_N)$ of G \newline
 \textbf{OUTPUT:} A path decomposition of G
 \State $K \gets \emptyset$ \Comment{Set of paths}
\For{ \textbf{every vertex }$v_i \in T$ \textbf{in ascending topological order}}
\If{$v_i\mbox{ is an immediate successor of the last node of a chain C}$}
\State \textbf{add} $v_i$ \textbf{to} $C$
\Else
\State  $C \gets \mbox{new Chain}()$
\State \textbf{add} $v_i$ \textbf{to} $C$ 
\State \textbf{add} $C$ \textbf{to} $K$
\EndIf
\EndFor
\State \textbf{return} $K$
\EndProcedure
\end{algorithmic}
\end{algorithm}

\ \

\section{DAG Decomposition into Chains}  \label{sec:ChainDec}
In this section, we present a path concatenation technique that takes as input any path decomposition of a DAG $G=(V,E)$ and constructs a chain decomposition by performing repeated path concatenations. If it performs $c$ path concatenations and $l$ is the length of a longest path of the graph, then it requires $O(|E|+c*l)$ time. In fact, the actual cost for every successful concatenation is the path between the two concatenated paths, which is bounded by $(c*l)$. 
In order to apply our path concatenation algorithm, we first compute a path decomposition of the graph. We propose to use linear-time algorithms based on Node-Order Heuristic or Chain Order Heuristic, as described in the previous section.

\subsection{Path/Chain Concatenation}
Two chains $c_1$ and $c_2$ can merged into a new chain if the last vertex of $c_1$ can reach the first vertex of $c_2$, or if the last vertex of $c_2$ can reach the first vertex of $c_1$. The process of merging two or more chains into a new chain is called path or chain concatenation. To reduce the number of chains of any given chain decomposition, we can find possible concatenations and merge the chains. Searching for a concatenation implies that we are searching for a path between two chains. 
We can start searching from the first vertex of a chain looking for the last vertex of another chain, or from the last vertex of a chain looking for the first vertex of another chain.

Given a DAG $G=(V, E)$ and a path decomposition $D_p$ that contains $k_p$ paths we will build a chain decomposition of $G$ that contains $k_c$ chains in $O(|E|+(k_p-k_c)*l)$ time, where $l$ is the length of a longest path of $G$. This is accomplished by performing path/chain concatenations. Since each path/chain concatenation reduces the number of chains by one, the total number of such concatenations is $(k_p-k_c)$. Since paths are by definition chains of a special structure, and we start by concatenating paths into chains, we use the more general term chain in our algorithms.  


For every path in $D_p$ we start a reversed DFS lookup function from the first vertex of a chain, looking for the last vertex of another chain traversing the edges backward. The reversed DFS lookup function is the well-known depth-first search graph traversal for path finding. If the reversed DFS lookup function detects the last vertex of a chain, then it concatenates the chains. If we simply perform the above, as described, then the algorithm will run in $O(k_p*|E|)$ since we will run $k_p$ reversed DFS searches. However, every reversed DFS search can take advantage of the previous reversed DFS results. A reversed DFS for path finding returns the path between the source vertex and the target vertex, which in our case, is the path between the first vertex of a chain and the last vertex of another chain. Hence, every execution that goes through a set of vertices $V_i$ it splits them into two vertex disjoint sets, $R_i$ and $P_i$.  $P_i$ contains the vertices of the path from the source vertex to the destination vertex and $R_i$ contains every vertex in $V_i-P_i$. If no path is found then $V_i=R_i$ and $P_i=\emptyset$.

Notice that every vertex in the set $R_i$ is not the last vertex of a chain. If it were then it would belong to $P_i$ and not to $R_i$. Similarly, for every vertex in $R_i$, all its predecessors are in $R_i$ too. Hence, if a forthcoming reversed DFS lookup function meets a vertex of $R_i$, there is no reason to proceed with its predecessors.  


\begin{algorithm}
\caption{Concatenation}\label{alg:conc}
\begin{algorithmic}[1]
\Procedure{Concatenation}{$G,D$}\newline
 \textbf{INPUT:} A DAG $G=(V,E)$, and a path decomposition D of G \newline
 \textbf{OUTPUT:} A chain decomposition of G
\For{\textbf{each path:} $p_i \in D$}
\State $f_i \gets \mbox{ first vertex of } p_i$
\State $(R_i,P_i) \gets \mbox{reversed\_DFS\_lookup}(G,f_i)$
\If{$P_i\neq \emptyset$}

\State $l_i \gets \mbox{ destination vertex of } P_i$
\Comment{Last vertex of a path}
\State \textbf{Concatenate\_Paths(} $l_i$, $f_i$ \textbf{)} 
\EndIf
\State $G \gets G \setminus R_i$
\EndFor
\EndProcedure
\end{algorithmic}
\end{algorithm}

Algorithm \ref{alg:conc} shows our path/chain concatenation technique. Observe that the reversed DFS lookup function is invoked for every starting vertex of a path.
Every reversed DFS lookup function goes through the set $R_i$ and the set $P_i$, examining the nodes and their incident edges. $P_i$ is the path from the first vertex of a chain to the last vertex of another chain. The set $R_i$ contains all of the vertices the function went through except the vertices of $P_i$.  Hence we have the following theorem:

\begin{theorem}
The time complexity of Algorithm \ref{alg:conc} is $O(|E|+(k_p-k_c)*l)$, where $l$ is the length of a longest path in $G$.
\end{theorem}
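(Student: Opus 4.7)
The plan is to charge the total running time of Algorithm~\ref{alg:conc} to three buckets, each with a clean combinatorial bound. First, I would establish the per-iteration cost: a single reversed DFS lookup from $f_i$ runs in $O(|V_i| + |E_i|)$ time, where $V_i = R_i \cup P_i$ is the set of vertices touched and $E_i$ is the set of edges examined. This is just the standard DFS complexity restricted to the graph as it stands at the start of iteration $i$.

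Second, I would handle the vertex work. The sets $R_i$ are pairwise disjoint because line~9 physically removes $R_i$ from $G$ immediately after iteration $i$, so a vertex placed into $R_i$ cannot reappear in any later $V_j$. Therefore $\sum_i |R_i| \leq |V|$. For the $P_i$ side, a nonempty $P_i$ corresponds precisely to a successful concatenation, which reduces the current chain count by one. Starting from $k_p$ chains and ending with $k_c$ chains, there are exactly $k_p - k_c$ indices with $P_i \neq \emptyset$, and since each $P_i$ is a simple directed path in $G$ we have $|P_i| \leq l$. Thus $\sum_i |P_i| \leq (k_p - k_c)\cdot l$.

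Third, I would bound the edge work using the property noted in the text just before the algorithm: for every $v \in R_i$, all predecessors of $v$ present in the current graph at the start of iteration $i$ are themselves placed into $R_i$. This is what makes the deletion on line~9 safe, and it is also the key to the edge accounting. It lets me classify each examined edge $(u,v)$ into (a) edges incident to some $R_i$, which are discarded together with $R_i$ and therefore contribute $O(|E|)$ summed over all iterations, or (b) edges internal to some concatenating path $P_i$, which contribute at most $|P_i|-1$ per iteration and hence $O((k_p - k_c)\cdot l)$ in total. Adding the buckets yields $O(|V| + |E| + (k_p-k_c)\cdot l) = O(|E| + (k_p-k_c)\cdot l)$, which matches the claim.

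The main obstacle, I expect, is the edge-accounting step, specifically ruling out the scenario where an edge between two persistent $P$-vertices is re-examined by many successive reversed DFS calls. My plan is to lean hard on the predecessor-closure property of the $R_i$'s: whenever a later DFS leaves a persistent vertex into a non-$P$ neighbor, that neighbor (together with its whole explored subtree) either completes a concatenation, placing it on the new $P_j$, or fails to do so, in which case it is absorbed into the new $R_j$ and deleted. Making this dichotomy precise, and thereby certifying that no edge is ever examined twice outside the chain-internal bucket, is the crux of the argument; the other two buckets are routine summations.
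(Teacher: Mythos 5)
Your decomposition into three buckets contains, as its first two buckets, exactly the paper's proof: the paper argues that the $R_i$ are pairwise disjoint because line~9 deletes them, so $\sum_i |R_i|\leq |V|$, and that a nonempty $P_i$ corresponds to a successful concatenation, so $\sum_i |P_i|\leq c\cdot l$ with $c=k_p-k_c$. Where you genuinely diverge is your third bucket: the paper stops at counting \emph{vertices} and never explicitly accounts for the \emph{edges} examined by repeated reversed DFS calls, whereas you correctly identify that the dangerous case is an edge touched by a persistent $P$-vertex being re-examined across many calls. Your proposed dichotomy is in fact the right one, and the piece you flag as the open ``crux'' can be closed with acyclicity: when the reversed DFS sits at a path vertex $u_b\in P_i$ and examines an incoming edge $(w,u_b)$, the tail $w$ either (i) starts an unexplored branch, in which case $w$ and its whole explored subtree end up in $R_i$ (dead end) or $w$ becomes the next path vertex $u_{b+1}$ (a path edge, charged to $|P_i|$), or (ii) is already visited in this call; in case (ii) $w$ cannot itself lie on $P_i$, since $w\to u_b$ together with the directed path $u_b\to\cdots\to w$ carried by $P_i$ would form a cycle in a DAG, so $w\in R_i$ and the edge $(w,u_b)$ is deleted along with $R_i$ at line~9. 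Hence every edge examination is charged either to an edge that is deleted in the same iteration (at most $|E|$ in total) or to a path edge of some $P_i$ (at most $(k_p-k_c)\cdot l$ in total), which completes your argument. In short, your route is sound and is actually more rigorous than the paper's own proof on the point you worried about; the paper implicitly assumes the edge work telescopes with the vertex deletions, and your bucket~(a)/(b) classification plus the acyclicity observation above is what makes that assumption legitimate.
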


\begin{proof}
Assume that we have $k_p$ paths. We call $k_p$ times the reversed\_DFS\_lookup function. Hence, we have $(R_i, P_i)$ sets, $0\leq i <k_p$.
In every loop, we delete the vertices of $R_i$. Hence, $R_i\cap R_j=\emptyset$ ,$0\leq i,j<k_p$ and $i\neq j$. 
We conclude that $\bigcup\limits_{i=0}^{k_p-1} R_i \subseteq V$ and $\sum_{i=0}^{k_p-1} |R_i| \leq |V|$.

A path/chain $P_i$, $0\leq i < k_p$, is not empty if and only if a concatenation has occurred. Hence, $\sum_{i=0}^{k_p-1}|P_i| \leq c*l$ where $c$ is the number of concatenations and $l$ is the longest path of the graph. 
Since every concatenation reduces the number of paths/chains by one, we have that $c = k_p-k_c$.
\end{proof}

Please notice that according to the previous proof the actual time complexity of Algorithm \ref{alg:conc} is $O(|E|+ \sum_{i=0}^{k_p-1}|P_i|)$ which in practice is expected to be significantly better than $O(|E|+c*l)$. 
Indeed, this is confirmed by our experimental results. We ran some extra experiments on ER
graphs of size 10k, 20k, 40k, 80k, and 160k vertices and average degree 10; the run times were 9, 34, 99,
228, and 538 milliseconds, respectively, which shows again that the execution time is almost linear.

\subsection{A Better Chain Decomposition Heuristic} \label{section H3}

Algorithm \ref{alg:conc} describes how to produce a chain decomposition of a DAG $G$ by applying a path/chain concatenation technique. Notice that this algorithm works for any given path/chain decomposition of $G$.  Of course, the number of future potential concatenations depends on the starting path/chain decomposition of $G$, that is given as input and it will play a crucial role in obtaining a solution that is close to optimal. Hence, we will present an improved heuristic, called Algorithm~\ref{alg:H3}, whose goal is to produce a path decomposition  of $G$ that will allow a large number of future concatenations.  Since these variations are not computationally intensive, Algorithm~\ref{alg:H3} runs in linear time.  Next, using the output path decomposition produced by Algorithm~\ref{alg:H3} we will describe Algorithm~\ref{alg:H3_conc} that computes a chain decomposition that is very close to the optimal, according to our experimental results, and still runs in $O(|E|+c*l)$ time.

Algorithm~\ref{alg:H3}, which is a variation of the Node Order Heuristic (Algorithm \ref{alg:NO}), follows the same philosophy but with two important additions that aim to construct a path decomposition in which more concatenations will be possible in the future: (a) when we visit a vertex of out-degree 1, we immediately add its unique immediate successor to its path/chain, and (b) instead of searching for the first available immediate predecessor (that is the last vertex of a path), we choose an available vertex with the lowest out-degree. This heuristic aims to create a path decomposition that will allow more concatenations in the future. Since Algorithm~\ref{alg:H3} goes through all vertices of $G$, and for every vertex, it examines all the outgoing (line 8) and all the incoming edges (line 19), its time complexity is linear.

We are ready now to present our best chain decomposition algorithm, ~Algorithm~\ref{alg:H3_conc}, which is a combination of~Algorithm~\ref{alg:H3} and the chain concatenation described in~Algorithm~\ref{alg:conc}.  The only addition to~Algorithm~\ref{alg:H3} is the block of the if-statement of line 10. In other words, if we do not find an immediate predecessor, we search all predecessors using the reversed\_DFS\_lookup function. The differentiation from our concatenation approach is that it does not use it as a post-processing step. It is applied in real time if the algorithm does not find an immediate predecessor that is the last vertex of a chain. This is done in order to avoid transitive edges that could lead to false matches.

 \begin{algorithm}[hbt!]
\caption{Path Decomposition (H3)}\label{alg:H3}
\begin{algorithmic}[1]
\Procedure{Node-Order based variation}{$G,T$}\newline
 \textbf{INPUT:} A DAG $G=(V,E)$, and a topological sorting $T(v_1,...,v_i,...,v_N)$ of G \newline
 \textbf{OUTPUT:} A path decomposition of G
 \State $K \gets \emptyset$ \Comment{Set of paths}
\For{ \textbf{every vertex }$v_i \in T$ \textbf{in ascending topological order}}
\State $\mbox{Chain }C$ \Comment{C is a pointer to a path}
\If{$u_i \mbox{ is assigned to a chain}$}
\State $C \gets v_i\mbox{'s chain}$
\ElsIf{$v_i\mbox{ is not assigned to a chain}$}
\State $l_i\gets\mbox{choose the immediate predecessor with the lowest outdegree }$
\State $\mbox{\hspace*{0.8cm}that is the last vertex of a chain}$
\If{$l_i\neq \mbox{null}$}
\State $C \gets \mbox{path indicated by }l_i$
\State \textbf{add} $v_i$ \textbf{to} $C$
\Else
\State  $C \gets \mbox{new Chain}()$
\State \textbf{add} $v_i$ \textbf{to} $C$ 
\EndIf
\State \textbf{add} $C$ \textbf{to} $K$
\EndIf
\If{there is an immediate successor $s_i$ of $v_i$ with in-degree 1}
\State \textbf{add} $s_i$ \textbf{ to } $C$ 
\EndIf

\EndFor
\State \textbf{return} $K$
\EndProcedure
\end{algorithmic}
\end{algorithm}

\begin{algorithm}[hbt!]
\caption{Chain Decomposition (H3 conc.)}\label{alg:H3_conc}
\begin{algorithmic}[1]
\Procedure{NodeOrder based variation with concatenation}{$G,T$}\newline
 \textbf{INPUT:} A DAG $G=(V,E)$, and a topological sorting $T(v_1,...,v_i,...,v_N)$ of G \newline
 \textbf{OUTPUT:} A chain decomposition of G
 \State $K \gets \emptyset$ \Comment{Set of chains}
\For{ \textbf{every vertex }$v_i \in T$ \textbf{in ascending topological order}}
\State $\mbox{Chain }C$
\If{$v_i \mbox{ is assigned to a chain}$}
\State $C \gets v_i\mbox{'s chain}$  \Comment{C is a pointer a to path}
\ElsIf{$v_i\mbox{ is not assigned to a chain}$}
\State $l_i\gets\mbox{choose the immediate predecessor with the lowest outdegree }$
\State $\mbox{\hspace*{0.8cm}that is the last vertex of a chain}$
\If{$l_i = \mbox{null}$}
\State $(R_i,P_i) \gets \mbox{reverse\_DFS\_lookup(}G,v_i\mbox{)}$
\If{$P_i\neq \emptyset$}
\State $l_i \gets \mbox{ destination vertex of } P_i$
\EndIf
\State $G \gets G \setminus R_i$
\EndIf
\If{$l_i\neq \mbox{null}$}
\State $C \gets \mbox{chain indicated by }l_i$
\State \textbf{add} $v_i$ \textbf{to} $C$
\Else
\State  $C \gets \mbox{new Chain}()$
\State \textbf{add} $v_i$ \textbf{to} $C$ 
\EndIf
\State \textbf{add} $C$ \textbf{to} $K$
\EndIf
\If{there is an immediate successor $s_i$ of $v_i$ with in-degree 1}
\State \textbf{add} $s_i$ \textbf{ to } $C$ 
\EndIf

\EndFor
\State \textbf{return} $K$
\EndProcedure
\end{algorithmic}
\end{algorithm}

\subsection{DAG Decomposition: Experimental Results}
\label{ExperimentsChains}

In this section we present extensive experimental results on graphs, most of which were created by NetworkX~\cite{networkx}. We use three different random graph generator models: Erdős-Rényi~\cite{erdHos1959renyi}, Barabasi-Albert~\cite{barabasi1999emergence}, and Watts-Strogatz~\cite{watts1998collective} models.
The generated graphs are made acyclic, by orienting all edges from low to high ID number, see the documentation of networkx~\cite{networkx} for more information about the generators.  Additionally, we use the Path‑Based DAG Model that was introduced in~\cite{lionakis2021constant} and is especially designed for DAGs with a predefined number of randomly created paths. For every model, we created 12 graphs: Six graphs of 5000 nodes and six graphs of 10000 nodes and average degree 5, 10, 20, 40, 80, and 160. We ran the heuristics on multiple copies of the graphs and examined the performance of the heuristics in terms of the number of chains in the produced chain decompositions. Additionally, we observed that the graphs generated by the same generator with the same parameters have small width deviation. For example, the percentage of deviation on ER and Path-Based model is about 5\% and for the BA model is less than 10\%. The width deviation of graph in the WS model  is a bit higher, but this is expected since the width of these graphs is smaller.
\newline
\newline
\textbf{Random Graph Generators:}
\begin{itemize}
  \item \textbf{Erdős-Rényi (ER) model} \cite{erdHos1959renyi}: The generator returns a random graph $G_{n,p}$, where $n$ is the number of nodes and every edge is formed with probability $p$.
  \item \textbf{Barabási–Albert (BA) model} \cite{barabasi1999emergence}: preferential attachment model: A graph of $n$ nodes is grown by attaching new nodes each with $m$ edges that are preferentially attached to existing nodes with high degree. The factors $n$ and $m$ are parameters to the generator.
  \item \textbf{Watts–Strogatz (WS) model} \cite{watts1998collective}: small-world graphs: First it creates a ring over $n$ nodes. Then each node in the ring is joined to its $k$ nearest neighbors. Then shortcuts are created by replacing some edges as follows: for each edge $(u,v)$ in the underlying “$n$-ring with $k$ nearest neighbors” with probability $b$ replace it with a new edge $(u,w)$ with uniformly random choice of an existing node $w$. The factors $n,k,b$ are the parameters of the generator.
  \item \textbf{Path‑Based DAG (PB) model} \cite{lionakis2021constant}:  In this model, graphs are randomly generated based on a number of predefined but randomly created paths.
\end{itemize}


We compute the minimum set of chains using the method of Fulkerson~\cite{Fulkerson}, which in brief consists of the following steps:  1) Construct the transitive closure $G^*(V,E')$ of $G$, where $V=\{v_1,...,v_n\}$. 2) Construct a bipartite graph $B$ with partitions $(V_1, V_2)$, where $V1=\{x_1, x_2,..., x_n\},$ $ V2=\{y_1, y_2, ..., y_n\} $. An edge $(x_i,y_j)$ is formed whenever $(v_i,v_j)\in E'$.  3) Find a maximal matching $M$ of $B$. The width of the graph is $n-|M|$. In order to construct the minimum set of chains, for any two edges $e_1,e_2\in M$, if $e_1=(x_i,y_t)$ and $e_2=(x_t,y_j)$ then connect $e_1$ to $e_2$.

The aim of our experiments is twofold: (a) to understand the behavior of the width of DAGs created in different models, and (b) to compare the behavior of all the heuristics used on graphs of these models.
Table \ref{fig:chains5000} shows the width and the  number of chains created by the various heuristics for every graph of 5000 nodes. Table \ref{fig:chains10000} shows the same for graphs of 10000 nodes. 
\begin{itemize}
  \item \textbf{CO}: Path decomposition using Chain Order Heuristic, (Algorithm \ref{alg:CO})
  \item \textbf{CO conc.}: Chain decomposition using Chain Order Heuristic and our concatenation technique. (Algorithm \ref{alg:CO} followed by Algorithm \ref{alg:conc})
  \item \textbf{NO}: Path decomposition using Node Order Heuristic. (Algorithm \ref{alg:NO})
  \item \textbf{NO conc.}: Chain decomposition using Node Order Heuristic and our concatenation technique. (Algorithm \ref{alg:NO} followed by Algorithm \ref{alg:conc})
  \item \textbf{H3}: Path decomposition using Algorithm \ref{alg:H3}, our NO Heuristic variation 
  \item \textbf{H3 conc.}: Chain decomposition using Algorithm \ref{alg:H3_conc}
  \item \textbf{Width}: The width of the graph (computed by Fulkerson's method). 
\end{itemize}

Observe that in both tables our chain decomposition heuristic, H3 conc., performs better than all other heuristics since it produces fewer chains. 

In order to understand the behavior of the width on DAGs of these different models we observe: (i) the BA model produces graphs with a larger width than ER, and (ii) the ER model creates graphs with a larger width than WS.  For the WS model, we created two sets of graphs: The first has probability $b$ equals 0.9 and the second 0.3. Clearly, if the probability $b$ of rewiring an edge is 0, the width would be one, since the generator initially creates a path that goes through all vertices. As the rewiring probability $b$ grows, the width grows. That is the reason we choose a low and a high probability. Figure \ref{fig:width5000} and \ref{fig:width10000} demonstrates the behavior of the width for each model on the graphs of 5000 and 10000 nodes. Another interesting observation is that the width of the ER model follows the curve $width =\frac{\mbox{nodes}}{\mbox{average degree}}$. 

In order to compare the number of chains produced by all heuristics used on the graphs of these models we created several figures.  Namely, we visualize how close is the result of our heuristics to the width. In Figures \ref{fig:Chart_WidthBarabasiComp10000}, \ref{fig:Chart_WidthErdosComp10000}, and \ref{fig:Chart_WidthWatts09Comp10000}, we show how close is the number of chains produced by our technique (i.e., the blue line) to the width (i.e., red line) for ER, BA, and WS models. A comparison between the ER and the Path-Based models is shown in Figure~\ref{fig:Chart_ErdosPBComp10000} for graphs of 10000 nodes and varying average degree. It is interesting to note that for sparser PB graphs the width is very close to the number of predefined (but randomly created paths) whereas the width of the ER graphs is very high.  However, when the graphs become denser the width for both models seem to eventually converge.



All heuristics run very fast, in just a few milliseconds. Thus it is not interesting to elaborate on their running time. 
In the following sections, we present partial run-time results that are obtained for computing an indexing scheme (see Tables~\ref{tables:CO_H3conc_scheme}, \ref{table:IndexingSchemeResults5000}, and~\ref{table:IndexingSchemeResults10000}).  These results indicate that the running time of the heuristics is indeed very low.

\begin{table}
\centering
\includegraphics[scale=0.7]{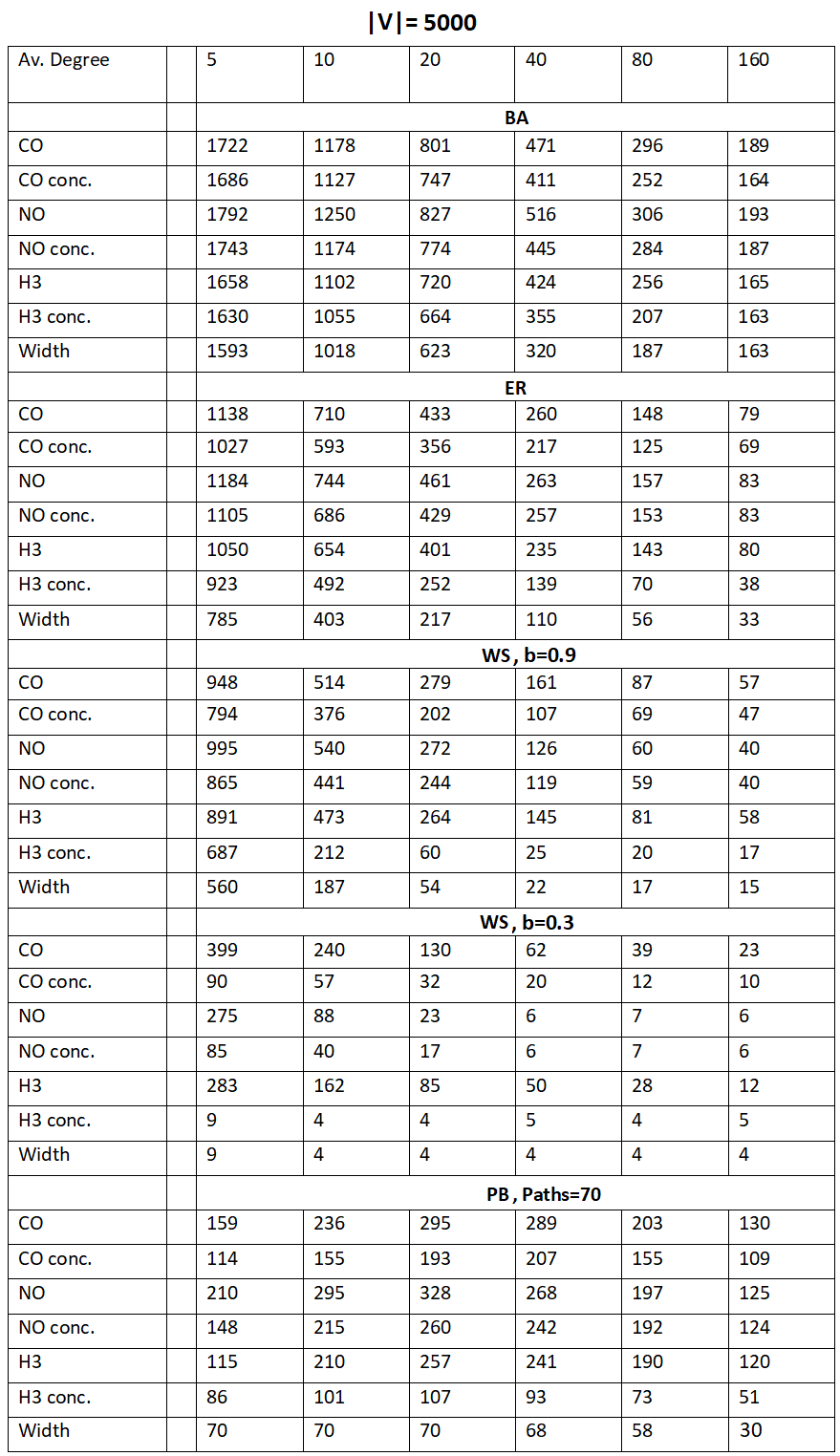}
\caption{Comparing the number of chains produced by the path and chain decomposition algorithms on graphs with 5000 nodes.}
\label{fig:chains5000}
\end{table}

\begin{table}
\centering
\includegraphics[scale=0.7]{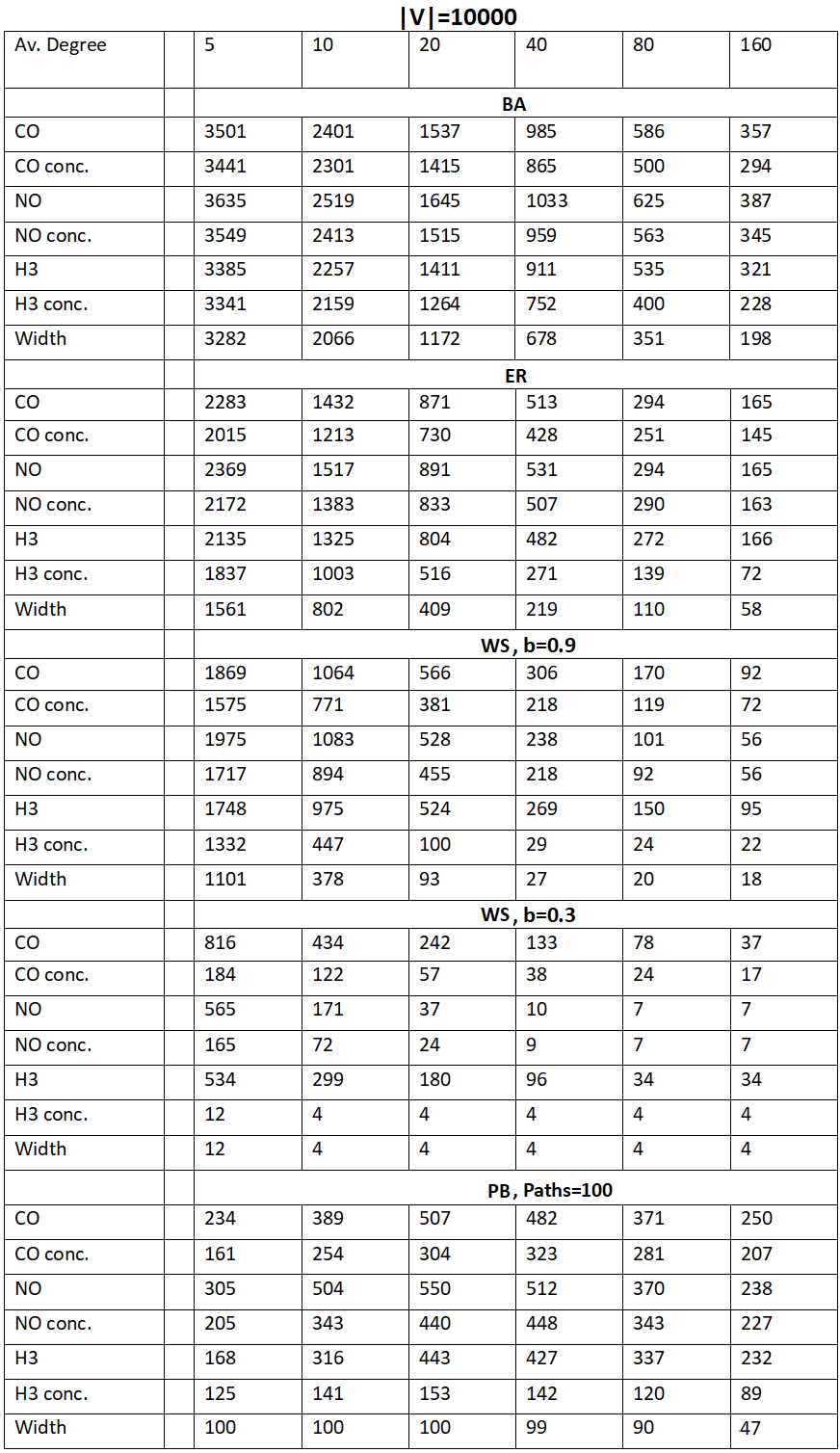}
\caption{Comparing the number of chains produced by the path and chain decomposition algorithms on graphs with 10000 nodes.}
\label{fig:chains10000}
\end{table}
\begin{figure}
     \centering
     \begin{subfigure}[b]{0.8\textwidth}
         \centering
         \includegraphics[width=\textwidth]{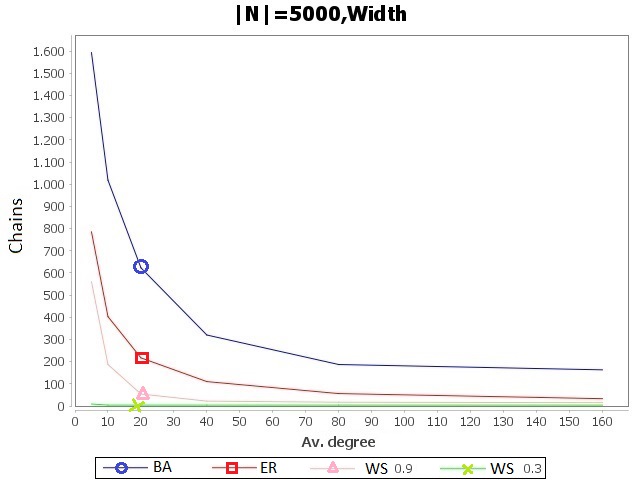}
         \caption{The width curve on graphs of 5000 nodes.}
         \label{fig:width5000}
     \end{subfigure}
     \hfill
     \begin{subfigure}[b]{0.8\textwidth}
         \centering
         \includegraphics[width=\textwidth]{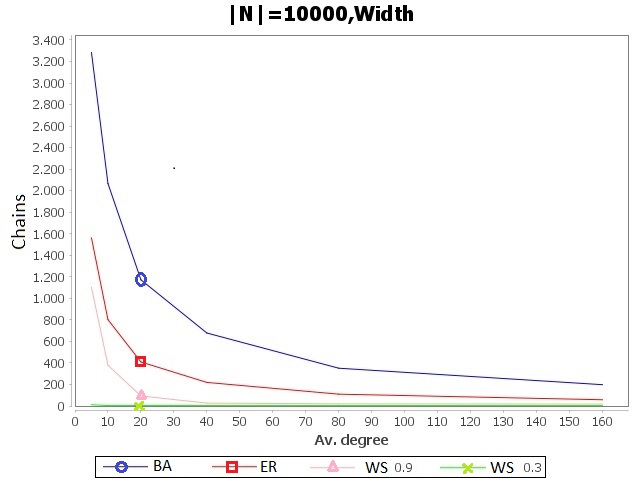}
         \caption{The width curve on graphs of 10000 nodes.}
         \label{fig:width10000}
     \end{subfigure}
    \caption{The width curve on graphs of 5000 and 10000 nodes using three different models.
}
        \label{fig:width_comp}
\end{figure}

\begin{figure}
\centering
 \includegraphics[width=0.9\textwidth]{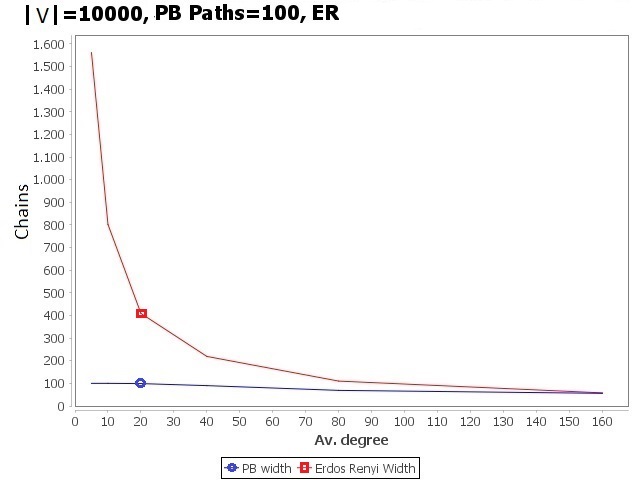}
\caption{A comparison of width between ER model and PB model.}
\label{fig:Chart_ErdosPBComp10000}
\end{figure}

\begin{figure}
\centering
\includegraphics[width=0.9\textwidth]{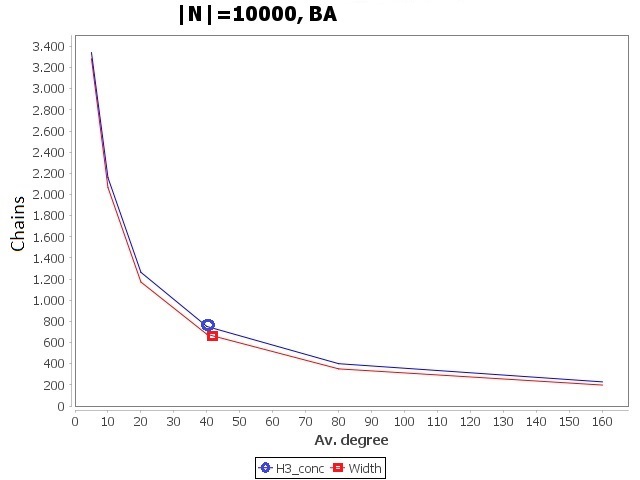}
\caption{The efficiency of the chain decomposition algorithm in the BA model.}
\label{fig:Chart_WidthBarabasiComp10000}
\end{figure}

\begin{figure}
\centering
\includegraphics[width=0.9\textwidth]{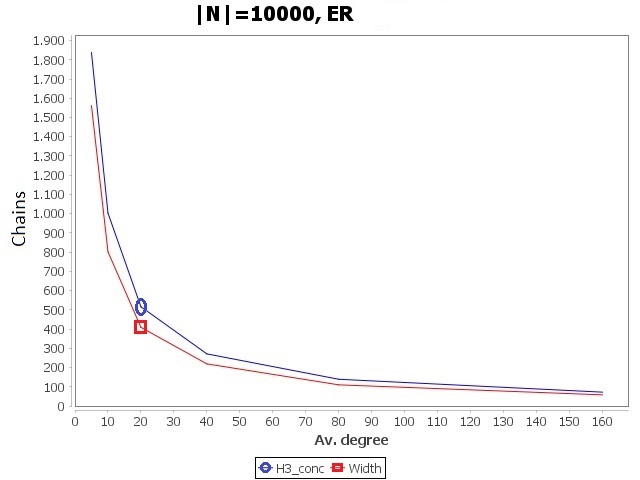}
\caption{The efficiency of our chain decomposition algorithm in the ER model.}
\label{fig:Chart_WidthErdosComp10000}
\end{figure}

\begin{figure}
\centering
 \includegraphics[width=0.9\textwidth]{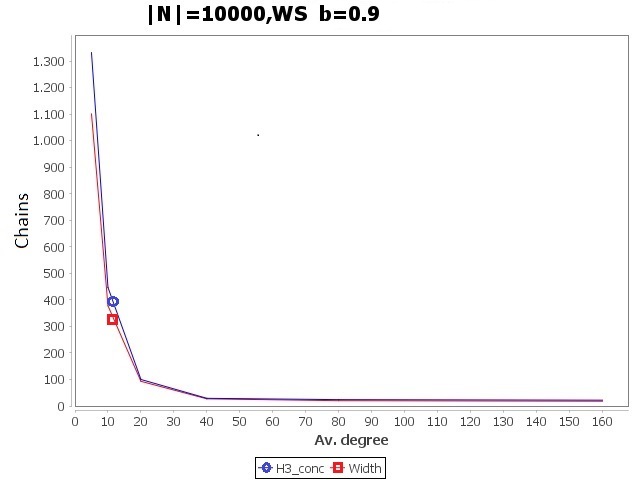}
\caption{The efficiency of our chain decomposition algorithm in WS model.}
\label{fig:Chart_WidthWatts09Comp10000}
\end{figure}

\begin{figure}
\centering
 \includegraphics[width=0.9\textwidth]{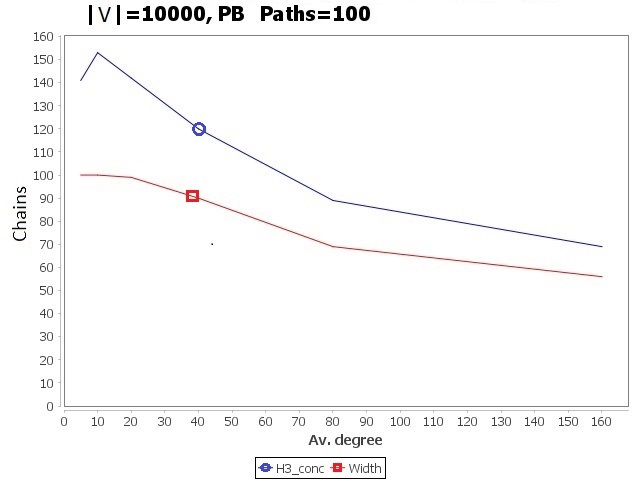}
\caption{The efficiency of our chain decomposition algorithm in the PB model.}
\label{fig:Chart_WidthPBMComp10000}
\end{figure}
\clearpage
\section{Hierarchies and Faster Transitivity}\label{sec:HierTrans}
The importance of removing transitive edges in order to create an abstract graph utilizing paths and chains was first described in~\cite{lionakis2019adventures}.  Their focus was on graph visualization techniques, while in this work we apply their abstraction to the transitive closure problem.  Hence we state the following useful observations:
\begin{enumerate}
    \item Given a chain decomposition $D$ of a DAG $G=(V,E)$, each vertex $v_i \in V$, $0\leq i < |V|$, can have at most one outgoing non-transitive edge directed to a vertex of every other chain.
    \item Given a chain decomposition $D$ of a DAG $G=(V,E)$, each vertex $v_i \in V$, $0\leq i < |V|$, can have at most one incoming non-transitive edge directed from a vertex of every other chain.
    \item Let $G=(V,E)$ be a DAG with width $w$. The non-transitive edges of $G$ are less than or equal to $width*|V|$, in other words $|E_{red}|=|E|-|E_{tr}|\leq width*|V|$.
\end{enumerate}



\begin{figure}[ht!]
     \centering
     \begin{subfigure}[b]{0.25\textwidth}
         \centering
         \includegraphics[width=\textwidth]{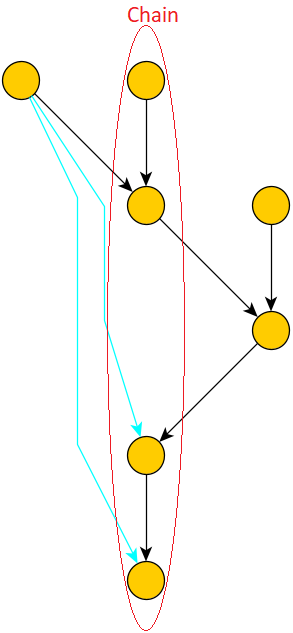}
         \caption{}
         \label{fig:transEdges_a}
     \end{subfigure}
     \hfill
     \begin{subfigure}[b]{0.25\textwidth}
         \centering
         \includegraphics[width=\textwidth]{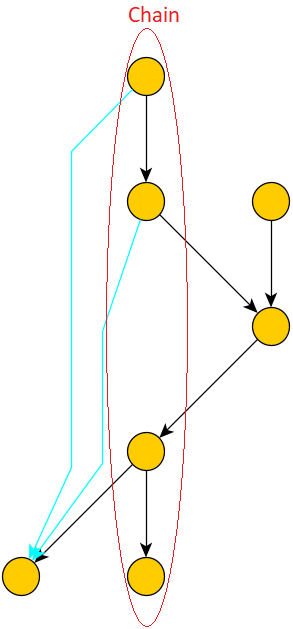}
         \caption{}
         \label{fig:transEdges_b}
     \end{subfigure}
        \caption{
        The light blue edges are transitive. (a) shows the outgoing transitive edges that end in the same chain. (b) shows the incoming transitive edges that start from the same chain.}
        \label{fig:transEdges}
\end{figure}


An interesting application of the above observations
is that we can find a significantly large subset of $E_{tr}$ in linear time as follows: Given any chain (or path) decomposition with $k_c$ chains, we can trace the vertices and their outgoing edges and keep the edges that point to the lowest point of each chain, rejecting the rest as transitive. We do the same for the incoming edges keeping the edges that come from the highest point (i.e., the vertex with the highest topological rank) of each chain. This way, we can find a significantly large subset $E_{tr}' \subseteq E_{tr}$. Hence, $|E-E_{tr}'|\leq k_c*|V|$. Clearly, this approach can be used as a linear-time preprocessing step in order to reduce the size of any DAG. Consequently, this will speed up every transitive closure algorithm bounding the number of edges of an input graph, and the indegree and outdegree of every vertex by $k_c$.
For example, algorithms based on tree cover, see \cite{agrawal1989efficient,chen2005stack,trissl2007fast,wang2006dual},
are practical on sparse graphs and can be enhanced further with such a preprocessing step that removes transitive edges. Additionally, this approach may have practical applications in dynamic transitive closure techniques: If we answer queries on time using graph traversal for every query, we could reduce the size of the graph with a fast (linear-time) preprocessing step that utilizes chains. Also, we could quickly decide if the edges we add and/or remove are transitive (or not). Transitive edges do not affect the transitive closure, hence no updates are required. This could be practically useful if we are adding/removing edges in a  dynamic setting.

\section{Reachability Indexing Scheme}  \label{sec:IndexingScheme}
In this section, we present an important application of a fast and practical chain decomposition technique. Namely, we solve the transitive closure problem by creating a reachability indexing scheme that is based on chain decomposition and we evaluate it by running extensive experiments. Our experiments shed light on the interplay of various important factors as the density of the graphs increases.

Jagadish described a compressed transitive closure technique in 1990~\cite{Jagadish:1990:CTM:99935.99944} applying the indexing scheme and path/chain decomposition. His method uses successor lists and focuses on the compression of the transitive closure.  Thus his scheme does not answer queries in constant time. As discussed above, Jagadish's heuristic for chain decomposition runs in $O(n^2)$ time using a pre-computed transitive closure. Our technique runs in almost linear time without requiring a pre-computed transitive closure, and the result is close to the optimal. 
Simon~\cite{simon}, describes a technique similar to~\cite{Jagadish:1990:CTM:99935.99944}. His technique is based on computing a path decomposition, thus boosting the method presented in~\cite{goralcikova1979}. The linear time heuristic used by Simon is very similar to the Chain Order Heuristic  of~\cite{Jagadish:1990:CTM:99935.99944}, also described earlier. 
A different example is a graph structure referred to as path-tree cover introduced in ~\cite{jin2008efficiently}, where the authors utilize a path decomposition algorithm to build their labeling.

In the following subsections, we describe how to compute an indexing scheme in $O(k_c*|E_{red}|)$ time, where $k_c$ is the number of chains (in any given chain decomposition) and $|E_{red}|$ is the number of non-transitive edges. Following the observations of section~\ref{sec:HierTrans},
the time complexity of the scheme
 can be expressed as $O(k_c*|E_{red}|) = O(k_c*width*|V|)$ since $|E_{red}| \leq width*|V|$.  Our scheme utilizes arrays of indices, in a similar fashion as Simon's technique~\cite{simon}, to answer queries in constant time. The space complexity is $O(k_c*|V|)$.
 
 For our experiments we utilize our chain decomposition approach, which produces smaller decompositions than previous techniques, without any considerable run-time overhead.  Thus the indexing scheme is more efficient both in terms of time and space requirements.
 The experimental work shows that the chains rarely have the same length. Usually, a decomposition consists of a few long chains and several short chains. Hence, for most graphs it is not even possible to have $|E_{red}|=width*|V|$. In fact, $|E_{red}|$ is usually  much lower than that. The experimental results presented in Tables \ref{table:IndexingSchemeResults5000} and \ref{table:IndexingSchemeResults10000} confirm this observation in practice.


Given a directed graph with cycles, we can find the strongly connected components (SCC) in linear time. Since any vertex is reachable from any other vertex in the same SCC (they form an equivalence class), all vertices in a SCC can be collapsed into a supernode. Hence, any reachability query can be reduced to a query in the resulting directed acyclic graph (DAG). This is a well-known step that has been widely used in many applications. Therefore, with loss of generality, we assume that the input graph to our method is a DAG. The following general steps describe how we compute the reachability indexing scheme:
\begin{enumerate}
  \item Compute a Chain decomposition
  \item Sort all Adjacency Lists
  \item Create an Indexing Scheme 
\end{enumerate}
In Step 1, we use our chain decomposition technique that runs in $O(|E|+c*l)$ time.
In Step 2, we sort all the adjacency lists in $O(|V|+|E|)$ time. Finally, we create an indexing scheme in $O(k_c*|E_{red}|)$ time and $O(k_c*|V|)$ space. 
Clearly,  if the algorithm of Step 1 computes fewer chains then Step 3 becomes more efficient in terms of time and space.

\subsection{The Indexing Scheme}
Given any chain decomposition of a DAG $G$ with size $k_c$, an indexing scheme will be computed for every vertex that includes a pair of integers and an array of size $k_c$ of indexes. See for example Figure \ref{IndexingSchemeExample}. The first integer of the pair indicates the node's chain and the second its position in the chain. For example, vertex 1 of Figure \ref{IndexingSchemeExample} has a pair $(1,1)$. This means that vertex 1 belongs to the $1st$ chain, and it is the $1st$ element in it.
Given a chain decomposition, we can easily construct the pairs in $O(|V|)$ time with a traversal of the chains.
Every entry of the $k_c$-size array represents a chain. The $i$-th cell represents the $i$-th chain. The entry in the $i$-th cell corresponds to the lowest point of the $i$-th chain the vertex can reach. For example, the array of vertex 1 is $[1,3,2]$. The first cell of the array indicates that vertex 1 can reach the first vertex of the first chain (can reach itself, reflexive property). The second cell of the array indicates that vertex 1 can reach the third vertex of the second chain (There is a path from vertex 1 to vertex 7). Finally, the third cell of the array indicates that vertex 1 can reach the second vertex of the third chain.

Notice that we do not need the second integer of all pairs. If we know the chain a vertex belongs in, we can conclude its position using the array. We use this presentation to simplify the users' understanding.

The process of answering a reachability query is simple. Assume, there is a vertex $s$ and a target vertex $t$. To find if the vertex $t$ is reachable from the $s$, we get $t$'s chain, and we use it as an index in $s$'s array. Hence, we know the lowest point of $t$'s chain vertex $s$ can reach. $s$ can reach $t$ if that point is less than or equal to $t$'s position, else it cannot.

\begin{figure}
\centering
\includegraphics[width=0.6\textwidth]{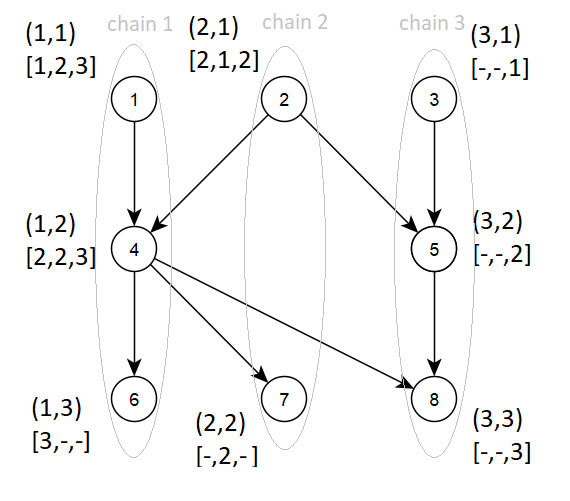}
\caption{An example of an indexing scheme.}
\label{IndexingSchemeExample}
\end{figure}

\subsection{Sorting Adjacency lists}
Next we present an algorithm, Algorithm \ref{alg:SortingAdjList}, that sorts all the adjacency lists of immediate successors in ascending topological order in linear time.  The variable stack indicates the sorted adjacency list. The algorithm traverses the vertices in reverse topological order $(v_n,...,v_1)$. For every vertex $v_i$, $1\leq i\leq n$, it pushes $v_i$ in the stacks of all immediate predecessors.
This step may be performed even before the chain decomposition step, as a preprocessing step. We present it in this section to emphasize its crucial role in the efficient creation of the indexing scheme. If the adjacency lists are not sorted then the time complexity of the algorithm would be $O(k_c*|E|)$ instead of $O(k_c*|E_{red}|)$.

\begin{algorithm}[!ht]
\caption{Sorting Adjacency lists}
\label{alg:SortingAdjList}
\begin{algorithmic}[1]
\Procedure{Sort}{$G,t$}\newline
 \textbf{INPUT:} A DAG $G=(V,E)$ 
\For{\textbf{each vertex:} $v_i \in G$}
\State $v_i\mbox{.stack  } \gets \mbox{new stack()}$
\EndFor
\For{\textbf{each vertex $v_i$ in reverse topological order} }
\For{\textbf{every incoming edge $e(s_j,v_i)$ } }
\State $s_j\mbox{.stack.push($v_i$)} $ 
\EndFor
\EndFor
\EndProcedure
\end{algorithmic}
\end{algorithm}

\begin{lemma}
\label{lemma:SortingAdjList}
Algorithm \ref{alg:SortingAdjList} sorts the adjacency lists of immediate successors in ascending topological order.
\end{lemma}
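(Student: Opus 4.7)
The plan is to unpack what ``sorted in ascending topological order'' means for the stack data structure used here, and then exploit the reverse-topological traversal in a direct way. Since a stack is LIFO, reading the stack in pop order reveals its contents from top to bottom; hence we want to show that when the algorithm terminates, for every vertex $s_j$, the successors of $s_j$ appear in the stack $s_j.\texttt{stack}$ so that the one with the smallest topological rank sits on top and the one with the largest rank sits at the bottom.

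First I would fix an arbitrary vertex $s_j$ and let $v_{i_1}, v_{i_2}, \ldots, v_{i_t}$ be its immediate successors, indexed so that $i_1 < i_2 < \cdots < i_t$ in the given topological order $T$. The key observation is that $s_j.\texttt{stack}$ is pushed onto exactly at the moments when the outer loop processes some successor $v_{i_r}$ (via the inner loop over incoming edges $(s_j, v_{i_r})$). Since the outer loop traverses vertices in \emph{reverse} topological order, these push events happen in the order $v_{i_t}, v_{i_{t-1}}, \ldots, v_{i_1}$.

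Next I would conclude by the LIFO property: after all pushes have occurred, $v_{i_1}$ is on top, $v_{i_2}$ is next, and $v_{i_t}$ is at the bottom. Consequently, iterating over $s_j.\texttt{stack}$ from top downward yields the successors in ascending topological order, which is exactly the statement of the lemma. A one-line remark on the running time finishes the picture: each edge $(s_j, v_i) \in E$ triggers exactly one push (inside the inner loop when $v_i$ is visited by the outer loop), and the initialization loop costs $O(|V|)$, so the total work is $O(|V| + |E|)$, i.e., linear.

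I do not expect a real obstacle here; the only subtlety worth stating explicitly is the convention that the ``sorted adjacency list'' is understood in terms of stack-pop order (top first), since otherwise the claim would be backward. I would therefore open the proof by making this convention explicit, after which the argument reduces to the straightforward LIFO observation above.
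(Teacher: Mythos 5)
Your proof is correct and rests on the same key observation as the paper's: the outer loop visits vertices in reverse topological order, so pushes onto each stack occur in descending rank order, and the LIFO discipline then yields ascending order on pop. The paper phrases this as a short proof by contradiction (assuming an out-of-order pair in some stack) while you argue directly, but the argument is essentially identical.
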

\begin{proof}
Assume that there is a stack $(u_1, ..., u_n)$, $u_1$ is the top of the stack. Assume that there is a pair $(u_j, u_k)$ in the stack, where $u_j$ has a bigger topological rank than $u_k$ and $u_j$ precedes $u_k$. That means that the for-loop examined $u_j$ before $u_k$ since it goes through the vertices in reverse topological order. This is a contradiction. The vertex $u_j$ cannot precede $u_k$ if it were examined first by the for-loop.
\end{proof}

\subsection{Creating the Indexing Scheme}
Now we present Algorithm \ref{alg:IndexingScheme} that constructs the indexing scheme. The first for-loop initializes the array of indexes. For every vertex, it initializes the cell that corresponds to its chain. The rest of the cells are initialized to infinite. The indexing scheme initialization is illustrated in figure \ref{IndexingSchemeExample_init}. The dashes represent the infinite. Notice that after the initialization, the indexes of all sink vertices have been calculated. Since a sink has no successors, the only vertex it can reach is itself.

\begin{algorithm}
\caption{Indexing Scheme}
\label{alg:IndexingScheme}
\begin{algorithmic}[1]
\Procedure{Create Indexing Scheme}{$G,T,D$}\newline
 \textbf{INPUT:} A DAG $G=(V,E)$, a topological sorting T of G, and the decomposition D of G.
\For{\textbf{each vertex:} $v_i \in G$}
\State $v_i\mbox{.indexes  } \gets \mbox{new table[size of D]}$
\State $v_i\mbox{.indexes.fill(} \infty \mbox{)}$
\State $ch\_no \gets v_i\mbox{'s chain index} $
\State $pos \gets v_i\mbox{'s chain position} $
\State $v_i\mbox{.indexes[ } ch\_no \mbox{ ]} \gets pos$
\EndFor
\For{\textbf{each vertex $v_i$ in reverse topological order} }
\While{$v_i$.stack $\neq \emptyset$ }
\State $target \gets v_j\mbox{.stack.pop()} $ 
\State $t\_ch \gets target\mbox{'s chain index} $
\State $t\_pos \gets target\mbox{'s chain position} $
\If{ $t\_pos<v_i$.indexes[$t\_ch$]}
\Comment{ $(v_i,target)$ is not transitive}
\State $v_i\mbox{.updateIndexes}(target.indexes)$
\EndIf
\EndWhile
\EndFor
\EndProcedure
\end{algorithmic}
\end{algorithm}

\begin{figure}
\centering
\includegraphics[width=0.6\textwidth]{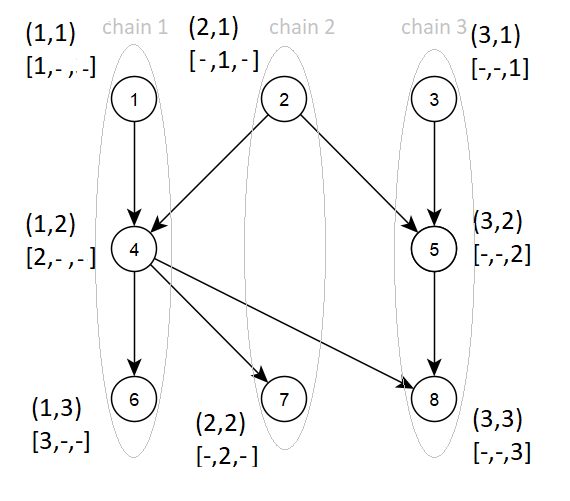}
\caption{Initialization of indexes.}
\label{IndexingSchemeExample_init}
\end{figure}

The second for-loop builds the indexing scheme. It goes through vertices in descending topological order. For each vertex, it visits its immediate successors (outgoing edges) in ascending topological order and updates the indexes.
Suppose we have the edge $(v,s)$, and we have calculated the indexes of vertex $s$ ($s$ is immediate successor of $v$). The process of updating the indexes of $v$ with its immediate successor $s$ means that s will pass all its information to the vertex $v$. Hence, vertex $v$ will be aware that it can reach $s$ and all its successors. Assume the array of indexes of $v$ is $[a_1,a_2,...,a_{k_c}]$ and the array of $s$ is $[b_1,b_2,...,b_{k_c}]$. To update the indexes of $v$ using $s$, we merely trace the arrays and keep the smallest values. For every pair of indexes $(a_i, b_i)$, $0\leq i < kc$, the new value of $a_i$ will be min\{$a_i$ , $b_i$\}. This process needs $k_c$ steps.

\begin{lemma}
\label{lemma:calc_indexes}
Given a vertex v and the calculated indexes of its successors, the while-loop of algorithm \ref{alg:IndexingScheme} (lines 10-17) calculates the indexes of v by updating its array with its non-transitive outgoing edges' successors.
\end{lemma}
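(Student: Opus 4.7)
I would establish two facts that together imply the lemma: (i) the test $t\_pos < v_i.\mbox{indexes}[t\_ch]$ at line 14 succeeds exactly on the non-transitive outgoing edges of $v_i$, and (ii) once the while-loop terminates, $v_i.\mbox{indexes}[c]$ equals the smallest position in chain $c$ reachable from $v_i$, for every chain $c$. Both facts proceed from Lemma \ref{lemma:SortingAdjList}, which guarantees that the immediate successors are popped from $v_i$.stack in ascending topological order, combined with the standing hypothesis that every immediate successor of $v_i$ already carries its correct indexes.

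For (i), first suppose $(v_i, target)$ is transitive. Then some immediate successor $s \neq target$ reaches $target$, and since $s$ precedes $target$ in topological order, Lemma \ref{lemma:SortingAdjList} forces $s$ to be popped first. By hypothesis $s.\mbox{indexes}[t\_ch] \le t\_pos$, and after the componentwise-minimum merge we have $v_i.\mbox{indexes}[t\_ch] \le t\_pos$ by the time $target$ is processed, so the test fails and the transitive edge is correctly skipped. Conversely, suppose $(v_i, target)$ is non-transitive. If the test failed, some previously processed $s \neq target$ must have installed a value $\le t\_pos$ in cell $t\_ch$, meaning $s$ reaches some vertex $u$ of chain $t\_ch$ at position $\le t\_pos$. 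Since positions within a chain are topologically ordered, $u$ reaches $target$ along the chain's internal directed path; concatenating with $v_i \to s \to \cdots \to u$ yields a $v_i$-to-$target$ path of length $\ge 2$, contradicting non-transitivity.

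For (ii), let $R_c$ denote the true minimum position in chain $c$ reachable from $v_i$. Its correct value is the minimum of $v_i$'s own position in chain $c$ (already installed by the initialization at lines 3--8, which precedes the while-loop) together with $\min_s s.\mbox{indexes}[c]$ taken over all immediate successors $s$. By (i) the while-loop merges exactly the non-transitive successors. The omitted (transitive) successors are harmless: whenever $(v_i, target)$ is transitive via $s$, the vertex $s$ reaches everything $target$ reaches, so $s.\mbox{indexes}[c] \le target.\mbox{indexes}[c]$ for every $c$, and the merge through $s$ already dominates whatever a merge through $target$ would contribute. Hence the componentwise minima produced by the while-loop equal $R_c$ for every $c$.

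The main obstacle is the converse direction of (i), namely arguing that the test is not prematurely falsified when $(v_i, target)$ is truly non-transitive. The resolution hinges on the chain structure: any vertex reached in chain $t\_ch$ at position no later than $t\_pos$ extends along the chain to $target$ and thus witnesses transitivity. Once this point is nailed down, the rest of the argument is a clean consequence of Lemma \ref{lemma:SortingAdjList} and the componentwise-minimum semantics of the merge.
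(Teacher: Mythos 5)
The forward half of your claim (i) --- every transitive edge $(v_i,target)$ is skipped because the intermediate immediate successor $s$ is popped earlier (Lemma \ref{lemma:SortingAdjList}) and already carries an entry $\le t\_pos$ for $target$'s chain --- is essentially the argument the paper itself gives; the paper's proof stops there and never addresses your converse direction or your part (ii). Your attempt to prove more is the right instinct, but the converse contains a step that fails. You assert that if the test $t\_pos < v_i.\mbox{indexes}[t\_ch]$ fails, then ``some previously processed $s\neq target$ must have installed a value $\le t\_pos$ in cell $t\_ch$.'' That is not the only possibility: the blocking value can come from the initialization (lines 3--8), which stores $v_i$'s \emph{own} position in the cell of $v_i$'s own chain. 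Whenever $target$ lies on the same chain as $v_i$, we have $t\_pos$ greater than $v_i$'s position, so the test fails unconditionally --- even when $(v_i,target)$ is non-transitive --- and your concatenation argument then produces only the single edge $(v_i,target)$, a path of length $1$, yielding no contradiction. The same case breaks the ``harmless omission'' step of your part (ii): if $(v_i,target)$ is a non-transitive edge to $v_i$'s successor on its own chain, no earlier merge dominates $target$'s array, yet the merge is skipped, and $v_i$ can remain unaware of descendants of $target$ in other chains. Concretely, take chains $\{a,b\}$ and $\{c\}$ with edges $(a,b)$ and $(b,c)$: after the algorithm runs, $a$'s array never records that $a$ reaches $c$. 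So your claim (i), stated as an ``exactly,'' is false for the pseudocode as written; this is arguably a defect of the algorithm's test rather than of your proof strategy, but your argument leans on it at precisely the point where it breaks.

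A smaller informality: when you conclude that ``after the componentwise-minimum merge we have $v_i.\mbox{indexes}[t\_ch]\le t\_pos$,'' you presume the merge with $s$ was actually executed. The pop of $s$ may itself have been skipped, and the skip condition inspects only one cell of the array, so you need an induction over the pop order showing that whenever a successor is skipped, some already-performed merge dominates its \emph{entire} array. The paper's proof has the same soft spot (``vertex $v$ will be firstly updated by $v_1$''), so this is not a point where you diverge from the paper, but a complete argument should state and prove that invariant explicitly.
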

\begin{proof}
Updating the indexes of vertex $v$ with all its immediate successors will make $v$ aware of all its descendants. The while-loop of Algorithm \ref{alg:IndexingScheme} does not perform the update function for every direct successor. It skips all the transitive edges. Assume there is such a descendant $t$ and the transitive edge $(v,t)$. Since the edge is transitive, we know by definition that there exists a path from $v$ to $t$ with a length of more than 1. Suppose that the path is $(v,v_1,..,t)$. Vertex $v_1$ is a predecessor of $t$ and immediate successor of $v$. Hence it has a lower topological rank than $t$. Since, while-loop examines the incident vertices in ascending topological order, then vertex $t$ will be visited after vertex $v_1$. The opposite leads to a contradiction.
Consequently, for every incident transitive edge of $v$, the loop firstly visits a vertex $v_1$ which is a predecessor of $t$. Thus vertex $v$ will be firstly updated by $v_1$ and it will record the edge $(v,t)$ as transitive. There is no reason to update vertex $v$ indexes with those of vertex $t$ since the indexes of $t$ will be greater or equal.
\end{proof}
\begin{theorem}
\label{thm:IndexingSchemeTime}
Let $G=(V,E)$  be a DAG. Algorithm \ref{alg:IndexingScheme} computes an indexing scheme for $G$ in $O(k_c*|E_{red}|+|E_{tr}|)$ time.
\end{theorem}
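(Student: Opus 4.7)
The plan is to charge the runtime of Algorithm~\ref{alg:IndexingScheme} to three sources: the initialization pass, the non-transitive outgoing edges, and the transitive outgoing edges. The initialization loop (lines 3--8) allocates a size-$k_c$ indexes array for each vertex and fills it with $\infty$ before setting a single entry; this contributes $O(k_c \cdot |V|)$ overall.

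For the main loop (lines 9--17), I would classify the work by the type of edge $(v_i,\mathit{target})$ processed by the inner while-loop. By Lemma~\ref{lemma:SortingAdjList}, popping $v_i$'s stack enumerates its immediate successors in ascending topological order, and by the time the outer reverse-topological loop reaches $v_i$, the indexes arrays of all successors of $v_i$ are already finalized. Reusing the argument of Lemma~\ref{lemma:calc_indexes}, for every transitive edge $(v_i,t)$ there is an immediate successor $v_1$ of $v_i$ lying on some $v_i \to v_1 \to \cdots \to t$ path, and $v_1$ has strictly smaller topological rank than $t$, so $v_1$ is popped before $t$. Since $v_1.\mathrm{indexes}[t\_ch] \le t\_pos$, by the time $t$ is examined the check on line~14 already fails, and the edge incurs only $O(1)$ work. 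Each non-transitive edge instead triggers the updateIndexes call, which merges two length-$k_c$ arrays element-wise in $O(k_c)$ time. Summing over the edges, the main loop costs $O(k_c \cdot |E_{red}| + |E_{tr}|)$.

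Combining the initialization cost with the main loop gives $O(k_c |V| + k_c |E_{red}| + |E_{tr}|)$. To reach the theorem's stated bound I would absorb $k_c |V|$ into $k_c |E_{red}|$ under the natural assumption that the underlying graph is weakly connected (or after removing isolated vertices), so that $|E_{red}| \ge |V|-1$ and hence $k_c |V| = O(k_c |E_{red}|)$. The main obstacle I expect is to state and prove crisply the invariant that line~14 reliably fails on \emph{every} transitive edge at the exact moment it is popped: this hinges on the interplay between the ascending-topological order of the popping sequence (Lemma~\ref{lemma:SortingAdjList}) and the reverse-topological order of the outer loop (which guarantees that a successor's indexes are final before that successor is ever consulted). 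Once that invariant is nailed down, the rest of the accounting is a routine edge-charging argument.
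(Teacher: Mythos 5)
Your proposal is correct and follows essentially the same route as the paper's proof: both rest on Lemma~\ref{lemma:calc_indexes} to argue that every transitive edge is detected and skipped in $O(1)$ time while each non-transitive edge triggers an $O(k_c)$ merge, with the reverse-topological outer loop guaranteeing that successors' arrays are final when consulted. Your accounting is in fact slightly more careful than the paper's, which silently drops the $O(k_c\cdot|V|)$ initialization cost that you correctly flag and absorb under a mild connectivity assumption.
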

\begin{proof}
In the initialization step, the indexes of all sink vertices have been computed as we described above. 
Taking vertices in reverse topological order, the first vertex we meet is a sink vertex. When the for-loop of line $9$ visits the first non-sink vertex, the indexes of its successors are computed (all its successors are sink vertices).
According to Lemma 1, we can calculate its indexes, ignoring the transitive edges. Assume the for-loop has reached vertex $v_i$ in the $i$th iteration, and the indexes of its successors are calculated. 
Following Lemma 1, we can calculate its indexes. Hence, by induction, we can calculate the indexes of all vertices, ignoring all $|E_{tr}|$ transitive edges in $O(k_c*|E_{red}|+|E_{tr}|)$ time.
\end{proof}

As described in the introduction, a parameterized linear-time algorithm for computing the minimum number of chains was recently presented in~\cite{caceres2022sparsifying}.
Its time complexity is $O(k^3|V | + |E|)$ 
where $k$ is the minimum number of chains, which is equal to the width of $G$.
Using their result Algorithm \ref{alg:IndexingScheme} computes an indexing scheme for $G$ in parameterized linear time. This implies that the transitive closure of $G$ can be computed in parameterized linear time. Hence we have the following:

\begin{corollary}
    \label{cor:IndexingSchemeTime}
Let $G=(V,E)$ be a DAG. Algorithm \ref{alg:IndexingScheme} computes an indexing scheme for $G$ in parameterized linear time.
Hence the transitive closure of $G$ can be computed in parameterized linear time.
\end{corollary}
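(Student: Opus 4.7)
The plan is to assemble the corollary by composing three already-stated ingredients rather than by introducing new machinery: (i) the parameterized linear-time minimum chain decomposition of~\cite{caceres2022sparsifying}, (ii) the structural bound $|E_{red}|\le \mathrm{width}\cdot|V|$ established in Section~\ref{sec:HierTrans}, and (iii) the running-time analysis of Algorithm~\ref{alg:IndexingScheme} already proved in Theorem~\ref{thm:IndexingSchemeTime}.

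First I would invoke the algorithm of~\cite{caceres2022sparsifying} to produce, in $O(k^3|V|+|E|)$ time, a chain decomposition of $G$ of minimum size $k_c=k$ equal to the width. I would then run the preprocessing Algorithm~\ref{alg:SortingAdjList} to sort the adjacency lists of immediate successors in $O(|V|+|E|)$ time, and finally invoke Algorithm~\ref{alg:IndexingScheme} on this decomposition. By Theorem~\ref{thm:IndexingSchemeTime}, the cost of this last step is $O(k_c\cdot|E_{red}|+|E_{tr}|)$.

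The key algebraic step is to replace the factor $|E_{red}|$ using the bound from Section~\ref{sec:HierTrans}: since $|E_{red}|\le \mathrm{width}\cdot|V|=k\cdot|V|$ and $|E_{tr}|\le|E|$, the cost of Algorithm~\ref{alg:IndexingScheme} becomes $O(k^2|V|+|E|)$. Summing with the decomposition and sorting costs yields $O(k^3|V|+k^2|V|+|E|)=O(k^3|V|+|E|)$, which is parameterized linear in the input size $|V|+|E|$, parameterized by the width $k$.

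For the second assertion, the indexing scheme produced by Algorithm~\ref{alg:IndexingScheme} implicitly encodes the transitive closure: as described in Section~\ref{sec:IndexingScheme}, every reachability query $(s,t)$ is answered in $O(1)$ time by reading the entry of $s$'s $k_c$-array at position $t$'s chain index and comparing it with $t$'s chain position. Hence, producing this data structure is equivalent to computing the transitive closure in the compact, queryable sense, and it is accomplished within the same parameterized linear budget. The only expository obstacle I anticipate is making precise this notion of ``computing the transitive closure'': the explicit edge set of $G^\ast$ can have $\Theta(|V|^2)$ entries, so the corollary should be read as providing a parameterized-linear-size certificate supporting constant-time reachability, not as constructing $G^\ast$ edge by edge.
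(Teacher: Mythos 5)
Your proposal is correct and follows essentially the same route as the paper: invoke the $O(k^3|V|+|E|)$ minimum chain decomposition of~\cite{caceres2022sparsifying}, feed it to Algorithm~\ref{alg:IndexingScheme}, and combine Theorem~\ref{thm:IndexingSchemeTime} with the bound $|E_{red}|\le \mathrm{width}\cdot|V|$ to get an overall $O(k^3|V|+|E|)$ bound. Your closing remark that ``computing the transitive closure'' must be read as producing a constant-query-time index rather than the explicit $\Theta(|V|^2)$-size edge set of $G^\ast$ is a useful clarification that the paper leaves implicit.
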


\subsection{Experimental Results}
We conducted an extensive experimental evaluation of our techniques and we report the results here.  
Tables \ref{tables:CO_H3conc_scheme_a} and \ref{tables:CO_H3conc_scheme_b} include the number of chains and real running time in ms, for the computation of an indexing scheme using two indicative decomposition techniques on ER graphs of 10000 nodes with varying average degree.
In Table \ref{tables:CO_H3conc_scheme_a}, we have created the indexing scheme using the chain order heuristic (a path decomposition), while in Table \ref{tables:CO_H3conc_scheme_b}, we use our chain decomposition algorithm.  Observe that even if the required time to compute the chain decomposition is higher, the total time is about the same, which shows the growing efficiency of the algorithms as the number of chains becomes smaller.



\begin{table}
     \centering
     \begin{subtable}[b]{0.9\textwidth}
         \centering
         \includegraphics[width=\textwidth]{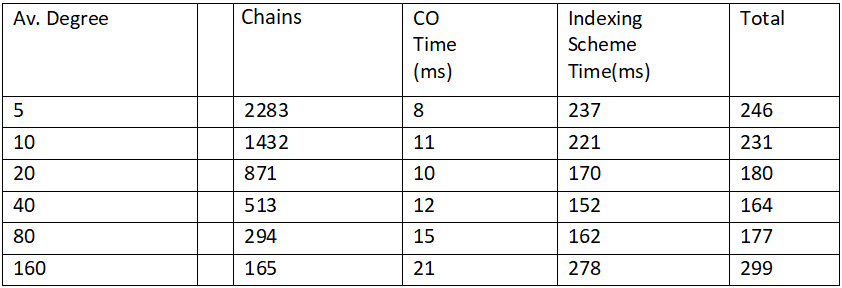}
         \caption{Metrics: Creating the indexing scheme in combination with the chain order heuristic.}
         \label{tables:CO_H3conc_scheme_a}
     \end{subtable}
     \hfill
     \begin{subtable}[b]{0.9\textwidth}
         \centering
         \includegraphics[width=\textwidth]{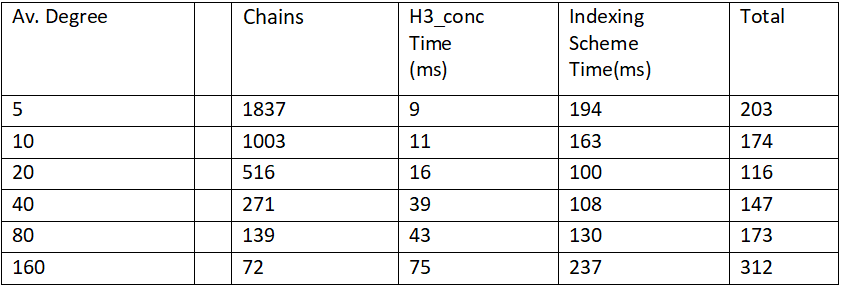}
         \caption{Metrics: Creating the indexing scheme in combination with algorithm \ref{alg:H3_conc} for chain decomposition.}
         \label{tables:CO_H3conc_scheme_b}
     \end{subtable}
        \caption{Run times of the indexing scheme using path and chain decomposition.}
        \label{tables:CO_H3conc_scheme}
\end{table}

Next, we performed experiments using the same graphs of 5000 and 10000 nodes as we described in Section \ref{ExperimentsChains} produced by the three different models of the Networkx~\cite{networkx} and the Path-Based model~\cite{lionakis2021constant}. We computed a chain decomposition using our  best performing approach, Algorithm \ref{alg:H3_conc}, H3\_conc, and created an indexing scheme using Algorithm \ref{alg:IndexingScheme}. For simplicity, we assume that the input graph has sorted adjacency lists, having ran the sorting of the adjacency lists, Algorithm \ref{alg:SortingAdjList}, as a preprocessing step. 
We report our experimental results in Tables \ref{table:IndexingSchemeResults5000} and \ref{table:IndexingSchemeResults10000} for graphs with 5000 nodes and graphs with 10000 nodes, respectively.  The meaning of the columns of the tables is as follows:
\begin{itemize}
  \item \textbf{Av. Degree}: The average degree of the graph.
  \item \textbf{Chains}: Number of chains computed by our heuristic (H3\_conc). 
  \item $|\textbf{E}_{\textbf{tr}}|$: Number of transitive edges.
  \item $|\textbf{E}_{\textbf{red}}|$: Number of non-transitive edges.
  \item $|\textbf{E}_{\textbf{tr}}|/|\textbf{E}|$: The percentage of transitive edges.
  \item \textbf{H3\_conc Time (ms)}: The running time for the chain decomposition step.
  \item \textbf{Indexing Scheme Time (ms)}: The running time for the indexing scheme creation step.
  \item \textbf{Total}: The total time(ms) needed to decompose the graph and create the indexing scheme. It is the sum of the two preceding cells.
  \item \textbf{TC}: The time needed to perform $n$ DFS (or BFS) starting from every vertex to mark the reachable vertices; the time complexity is $O(|V|*|E|)$.  The results are stored in a 2-dimensional adjacency matrix.
\end{itemize}

In theory, the phase of the indexing scheme creation needs $O(k_c*|E_{red}|+|E_{tr}|)$ steps. However, the numbers on the tables reveal some interesting findings:  (a) as the average degree increases and the graph becomes denser, the cardinality of $E_{red}$ remains almost stable; (b) the number of chains decrease; (c) we observe that the number of non-transitive edges, $E_{red}$, does not vary significantly as the average degree increases, i.e., the number of the transitive edges, $|E_{tr}|$, increases since $(E_{tr}=E-E_{red})$. Since the algorithm merely traces in linear time the transitive edges, the growth of $|E_{tr}|$ affects the run time only linearly. As a result, the run time of our technique does not increase as the the size of the (edges) input graph increases. To demonstrate it clearly, we show the curves of the running time in Figure \ref{fig:TimeErdosComp10000} for the graphs of 10000 nodes produced by the ER model. The flat (blue line) represents the run time of the indexing scheme, and the curve (red line) the run time of the DFS-based algorithm for computing the transitive closure (TC). Clearly, the time of the DFS-based algorithm  increases as the average degree increases, while the time of the indexing scheme is a straight line parallel to the $x$-axis. All models follow this pattern, see Tables \ref{table:IndexingSchemeResults5000} and \ref{table:IndexingSchemeResults10000}.
\begin{table}
\centering
\includegraphics[scale=0.8]{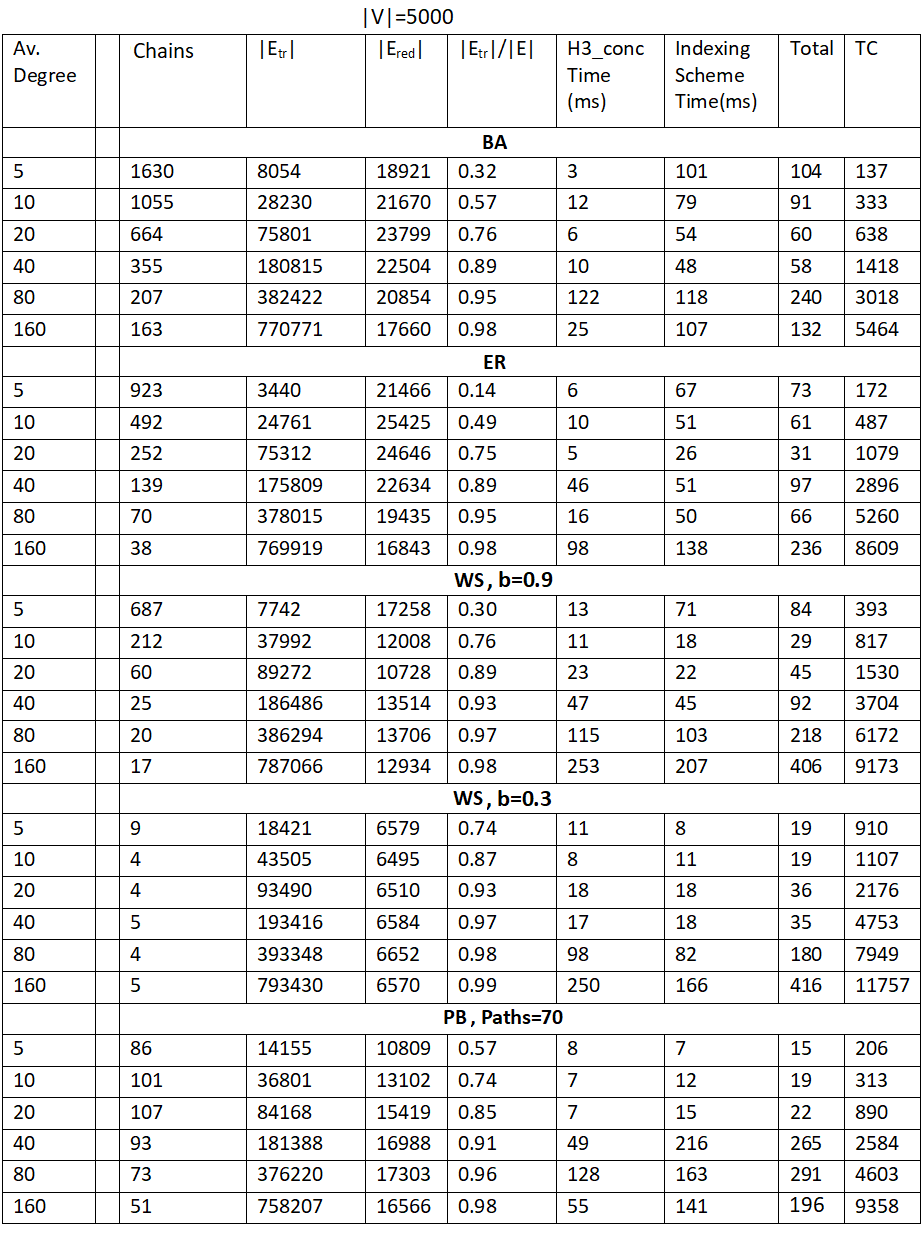}
\caption{Indexing scheme analysis on graphs of 5000 nodes.}
\label{table:IndexingSchemeResults5000}
\end{table}

\begin{table}
\centering
\includegraphics[scale=0.8]{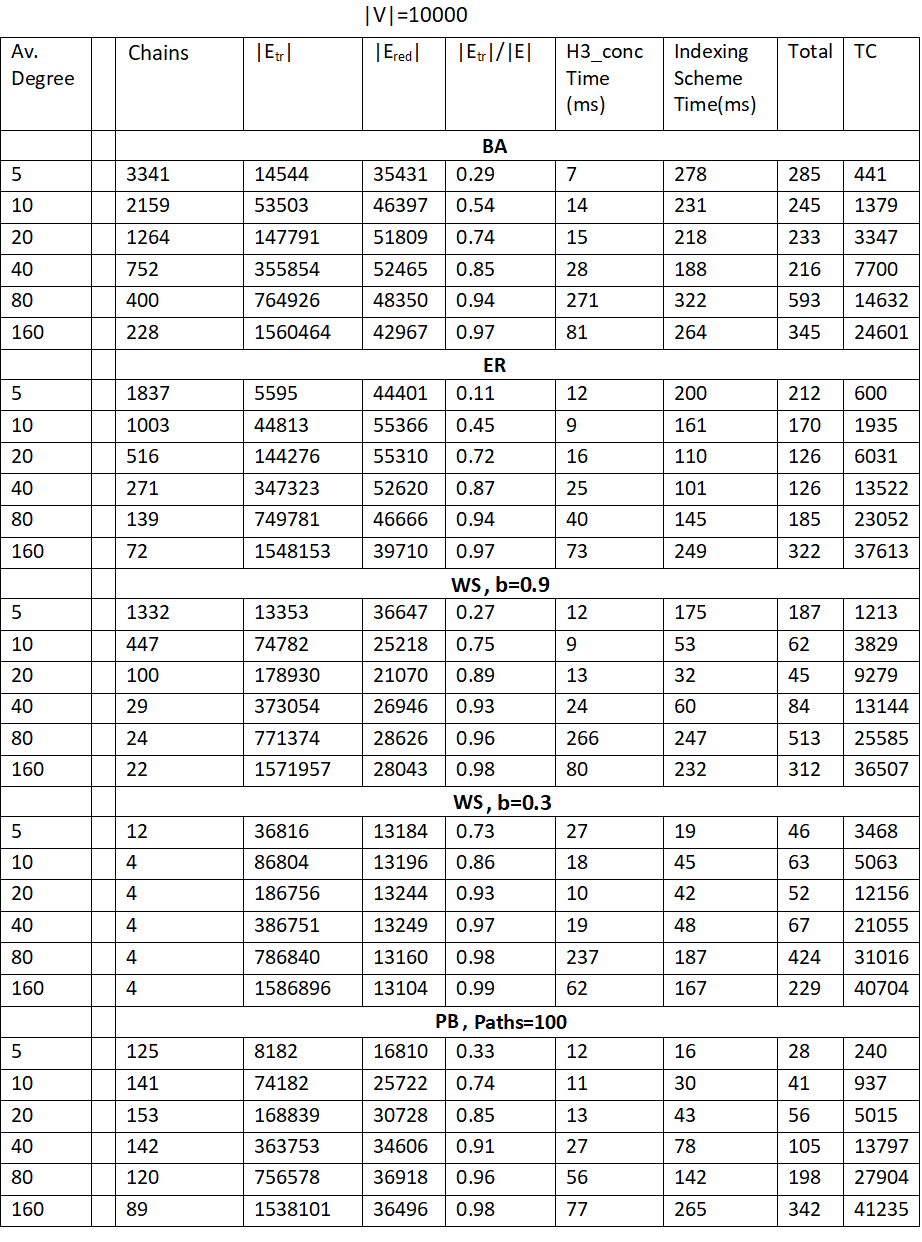}
\caption{Indexing scheme analysis on graphs of 10000 nodes.}
\label{table:IndexingSchemeResults10000}
\end{table}

We chose to decompose the graph into chains with our most efficient algorithm since it produces the fewest chains among all heuristics. Clearly, there is a trade-off to consider when building an indexing scheme. Assume that we have a path decomposition, and then we perform chain concatenation.  
If there is no concatenation between two paths, the concatenation algorithm will run in linear time.
On the other hand, if there are concatenations, for each of the paths, then the cost is $O(l)$ time, but the savings in the indexing scheme creation is $\Theta(|V|)$ in space requirements and $\Theta (|E_{red}|)$ in time, since every concatenation reduces the needed index size for every vertex by one. Hence, in comparison with doing merely a path decomposition, applying path concatenation, we create a more compact indexing scheme faster.
Another interesting and to some extent surprising observation that comes from the results of Tables \ref{table:IndexingSchemeResults5000} and \ref{table:IndexingSchemeResults10000} is that the transitive edges for almost all models of the graphs of 5000 and 10000 nodes with average degree above 20 is above 85\%, i.e.,  $|{E}_{{tr}}|/|{E}| \geq 85\%$, see the appropriate columns in both tables.  In some cases where the graphs are a bit denser the percentage grows above 95\%.  This has important implications in designing practical algorithms for faster transitive closure computation.

\begin{figure}
\centering
\includegraphics[scale=0.8]{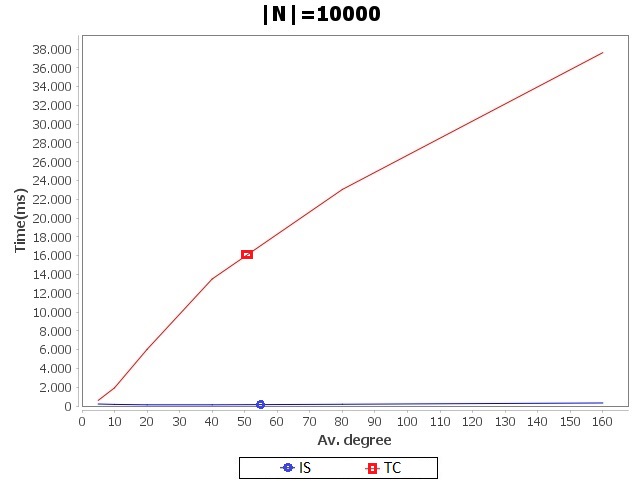}
\caption{Run time comparison between the Indexing Scheme (blue line) and TC (red line) for ER model on graphs of 10000 nodes, see Table \ref{table:IndexingSchemeResults10000}.}
\label{fig:TimeErdosComp10000}
\end{figure}

\section{Conclusions}
We presented heuristics that find a chain decomposition in almost linear time such that the number of chains is very close to the minimum and investigate how fast and practical chain decomposition algorithms can enhance transitive closure solutions. Our extensive experiments expose the practical behavior of (1) the width, (2) $E_{red}$, and (3) $E_{tr}$ as the density of graphs grows.  The also show the practical efficiency of our heuristics.
We show that the set $E_{red}$ is bounded  by $width*|V|$ and show how to find a substantially large subset of $E_{tr}$ in linear time given any path/chain decomposition. Finally, we build and evaluate an indexing scheme that allows us to answer reachability queries in constant time.  The time complexity to produce the scheme is $O(k_c*|E_{red}|)$, and its space complexity is $O(k_c*|V|)$.  
Our experimental work reveals the practical efficiency of this approach, especially for very large graphs with relatively high average degree.
Using recent results on parameterized linear time algorithms for computing the minimum number of chains, we showed that the transitive closure of a DAG can be computed in parameterized linear time.

The potential applications of these techniques in a dynamic setting where edges and nodes are inserted and deleted from a (very large) graph are significant in practice.  Although our techniques were not developed for the dynamic case, the picture that emerges is very interesting. According to our experimental results, see Tables \ref{table:IndexingSchemeResults5000} and \ref{table:IndexingSchemeResults10000}, the overwhelming majority of edges in a DAG are transitive.  The insertion or deletion of a transitive edge clearly requires a constant time update since it does not affect transitivity, and can be detected in constant time. On the other hand, the insertion or removal of a non-transitive edge may require a minor or major recomputation in order to reestablish a good chain decomposition.  However, even if the insertion/deletion of new nodes/edges causes  significant changes in the reachability index (transitive closure) one can simply recompute a chain decomposition in linear or almost linear time, and then recompute the reachability scheme in $O(k_c*|E_{red}|)$ time, and $O(k_c*|V|)$ space, which is still very efficient in practice. We plan to work on such dynamic indexes in the future.


\nocite{}
\printbibliography 
\end{document}